\def\submitteddate{June 4, 2017}
\def\reviseddate{January 29, 2018}
\renewcommand{\baselinestretch}{1}
\DeclareFontFamily{U}{matha}{\hyphenchar\font45}
\DeclareFontShape{U}{matha}{m}{n}{
      <5> <6> <7> <8> <9> <10> gen * matha
      <10.95> matha10 <12> <14.4> <17.28> <20.74> <24.88> matha12
      }{}
\DeclareSymbolFont{matha}{U}{matha}{m}{n}
\DeclareMathSymbol{\notdivides}{3}{matha}{"1F}
\DeclareMathSymbol{\divides}{3}{matha}{"17}
\begin{document}

\newcommand{\creationtime}{\today}

\pagestyle{fancy}
\renewcommand{\headrulewidth}{0cm}
\chead{\footnotesize{Connelly-Zeger}}
\rhead{\footnotesize{\reviseddate}}
\lhead{\footnotesize{\textit{Ring Capacity - Draft}}}
\cfoot{Page \arabic{page} of \pageref{LastPage}}

\renewcommand{\qedsymbol}{$\blacksquare$}

\newtheorem{theorem}              {Theorem}     [section]
\newtheorem{lemma}      [theorem] {Lemma}
\newtheorem{corollary}  [theorem] {Corollary}
\newtheorem{proposition}[theorem] {Proposition}
\newtheorem{remark}     [theorem] {Remark}
\newtheorem{conjecture} [theorem] {Conjecture}
\newtheorem{example}    [theorem] {Example}

\theoremstyle{definition}         
\newtheorem{definition} [theorem] {Definition}
\newtheorem*{claim} {Claim}
\newtheorem*{notation}  {Notation}

\newcommand{\CommHomomorphismLemma}{Lemma II.5}             
\newcommand{\CommLemmaFieldNoDominance}{Lemma III.2}        
\newcommand{\NonModuleHomomorphism}{Lemma I.6}              
\newcommand{\NonCommSimpleHomoLemmas}{Lemmas II.1 and II.3} 
\newcommand{\NonFaithfulModule}{Lemma II.6}                 
\newcommand{\NonSmallerModule}{Lemma II.9}                  

\newcommand{\Z}{\mathbb{Z}}
\newcommand{\Q}{\mathbb{Q}}

\newcommand{\alphabet}{\mathcal{A}}
\newcommand{\A}{\mathcal{A}}
\newcommand{\F}{\mathbb{F}}
\newcommand{\K}{\mathbb{K}}
\newcommand{\UniLinCap}[2]{\mathcal{C}_{lin}(#1,#2)}
\newcommand{\UniCapacity}[1]{\mathcal{C}(#1)}
\newcommand{\LinRegion}[2]{\mathcal{R}_{lin}(#1,#2)}
\newcommand{\Region}[1]{\mathcal{R}(#1)}

\newcommand{\rateregion}{rate region}
\newcommand{\rateregions}{rate regions}
\newcommand{\linearrateregion}{linear rate region}
\newcommand{\linearrateregions}{linear rate regions}

\newcommand{\ith}[2]{#1^{(#2)}}
\renewcommand{\vec}[2]{\left[#1 \right]_{#2}}

\newcommand{\scaDom}{scalarly dominates}
\newcommand{\fracDom}{fractionally dominates}
\newcommand{\nfracDom}{does not fractionally dominate}
\newcommand{\fracdominated}{fractionally dominated}
\newcommand{\fractionaldominance}{fractional dominance}
\newcommand{\scalardominance}{scalar dominance}

\newcommand{\charNetwork}[1]{Char-#1 network}

\newcommand{\Module}[2]{{}_{#2} #1}

\newcommand{\GF}[1]{\mathrm{GF}\!\left(#1\right)}
\newcommand{\Char}[1]{\mathsf{char}\!\left(#1\right)}
\newcommand{\Comment}[1]{ \left[\mbox{from  #1} \right]}
\newcommand{\Div}[2]{ #1  \bigm|  #2  } 
\newcommand{\NDiv}[2]{ #1   \notdivides   #2   }
\newcommand{\GCD}[2]{ \mathsf{gcd} \left( #1  ,  #2 \right)  } 
\newcommand{\PrimeFact}[1]{p_1^{k_1} \cdots p_{#1}^{k_{#1}}}

\newcommand{\osum}{\displaystyle\bigoplus}
\newcommand{\dsum}{\displaystyle\sum}
\newcommand{\act}{\cdot}
\newcommand{\newaction}{\odot}

\renewcommand{\emptyset}{\varnothing} 
\newcommand{\Network}{\mathcal{N}}
\newcommand{\knNetwork}[1]{\Network^{(#1)}}

\newcommand{\TBA}{*** To Be Added ***}

\setcounter{page}{1}

\title{Capacity and Achievable Rate Regions for Linear Network Coding over Ring Alphabets
  \thanks{This work was supported by the 
  National Science Foundation.
  \newline
  \indent \textbf{J. Connelly and K. Zeger} are with the 
  Department of Electrical and Computer Engineering, 
  University of California, San Diego, 
  La Jolla, CA 92093-0407 
  (j2connelly@ucsd.edu and zeger@ucsd.edu).
  }
}
\author{Joseph Connelly and Kenneth Zeger\\}
\date{
  \textit{
  IEEE Transactions on Information Theory\\
  Submitted: \submitteddate\\
  Revised: \reviseddate
  }
}

\maketitle
\begin{abstract}
  The rate of a network code is the ratio
  of the block size of the network's messages to that of its edge codewords.
  We compare the linear capacities and achievable rate regions of networks
  using finite field alphabets to the more general cases of arbitrary ring and module alphabets.
  For non-commutative rings, two-sided linearity is allowed.
  Specifically, we prove the following for directed acyclic networks:
  \begin{itemize}
    \item[(i)] The \linearrateregion{} and the linear capacity
    of any network over a finite field
    depend only on the characteristic of the field.
    Furthermore, any two fields with different characteristics
    yield different linear capacities for at least one network.

    \item[(ii)] Whenever the characteristic of a given finite field divides the size of a given finite ring,
    each network's \linearrateregion{} over the ring
    is contained in its \linearrateregion{} over the field.
    Thus, any network's linear capacity over a field is at least its linear capacity over any other ring of the same size.
    An analogous result also holds for linear network codes over module alphabets.

    \item[(iii)] Whenever the characteristic of a given finite field does not divide the size of a given finite ring,
    there is some network whose linear capacity over the ring
    is strictly greater than its linear capacity over the field.
    Thus, for any finite field, 
    there always exist rings over which some networks
    have higher linear capacities than over the field.
  \end{itemize}
\end{abstract}
\clearpage
\section{Introduction} 
  \label{sec:intro}

In network coding, solvability determines whether or not a network's receivers
can adequately deduce from their inputs a specified subset of the network's message values.
The solvability of directed acyclic networks
follows a hierarchy of different types of network coding.
For example, scalar linear coding over finite fields is known to be 
inferior to vector linear coding over finite fields \cite{Medard-NonMulticast},
which in turn is known to be inferior to non-linear coding \cite{DFZ-Insufficiency}.
On the other hand, the capacity of a network reveals how much transmitted information
per channel use 
(i.e., source messages per edge use)
can be sent to the network's receiver nodes
in the limit of large block sizes for transmission.
It is also known that
linear codes over finite fields
cannot achieve the full capacity of some networks \cite{DFZ-Insufficiency}.
Thus,
linear coding over finite fields
is inferior to more general types of network coding
in terms of both solvability and capacity.
Nevertheless, linear codes over finite fields
are attractive for
both theoretical and practical reasons
\cite{Li-LNC}.

In certain cases, 
linear coding over finite ring alphabets
can offer solvability advantages
over finite field alphabets \cite{CZ-Commutative,CZ-NonCommutative}.
An open question has been whether the linear capacity of a network 
over a finite field can be improved by using some other ring of
the same size as the field.
In other words, does the improvement in network solvability,
from using more general rings than fields,
also carry over to network capacity?
In the present paper, we answer this question in the negative.
That is, we prove that the linear capacity of a network cannot be improved by changing
the network coding alphabet from a field to any other ring of the same size.

Another open question has been whether the linear capacity of a network
over a finite field can depend on any aspect of the field other than its
characteristic.
Indeed it has been previously observed that the linear capacity of a network
can vary as a function of the field
(e.g., \cite{CZ-NonLinear, DFZ-Regions, DFZ-CharacteristicInequalities}),
but all known examples had linear capacities that only depended on
the fields' characteristics.
We also answer this question in the negative.
That is, 
we prove that any two fields with the same characteristic
will result in the same linear capacity for any given network.
Furthermore, 
any two fields with different characteristics will result in 
different linear capacities for at least one network.
We prove analogous (and more general) results for linearly achievable rate regions of networks
over finite fields.

Unlike finite fields, 
finite rings need not have prime-power size,
which may be advantageous in certain applications.
An open question has been
whether a network can increase its linearly achievable rate region
by allowing the alphabet to be a ring of non-power-of-prime size.
However, we again answer this question in the negative
by showing that 
a network's \linearrateregion{} over a ring
is contained in its \linearrateregion{} over any field
whose characteristic divides the ring's size.
This result follows from the fact that
every finite ring is isomorphic to some direct product of rings of prime-power sizes.
As a consequence of this result,
any network's linear capacity over a particular ring
is at most its linear capacity over any field
whose characteristic divides the ring's size.
These results extend naturally to the more general case
of linear network codes in which the alphabet has the structure of a finite module.

\newpage

\subsection{Modules, Linear Functions, and Tensor Products}
\label{ssec:modules}

We focus on linear network codes over finite rings,
but we prove many of our intermediate results in the broader context of linear network codes over modules.
In this section, we define linear functions over modules,
which generalize linear functions over rings.
We then formally define linear network codes over rings and modules in Section~\ref{ssec:ring-linearity}.

\begin{definition}
  A \textit{left $R$-module} is
  an Abelian group
  $(G,\oplus)$ together with a ring 
  $(R,+,*)$ of \textit{scalars}
  and an action 
  $\cdot : R \times G \to G $
  such that
  for all $r,s \in R$ and all $g,h \in G$ the following hold:
  \begin{align*}
    r \cdot (g\oplus h) &= (r \cdot g) \oplus (r \cdot h) \\
    (r + s) \cdot g &= (r \cdot g) \oplus (s \cdot g) \\
    (r*s) \cdot g &= r \cdot (s \cdot g) \\
    1 \cdot g &= g .
  \end{align*}
  \label{def:module}
  \vspace{-2.5em}
\end{definition}
From these properties, it also follows that
$0 \act g = 0$ and $r \act 0 = 0$
for all $g \in G$ and all $r \in R$.
For brevity, we will sometimes refer to such an $R$-module 
as $\Module{G}{R}$ or simply the $R$-module $G$.
Since network coding alphabets are presumed to be finite,
a module will always refer to a module in which $G$ is finite.
However, in principle,
the ring need not be finite,
so we make no assumptions about the cardinality of the ring in a module.
Some important examples of modules include:
\begin{itemize}\itemsep0em 
\vspace{-0.5em}
  \item The ring of integers $\Z$ acts on any Abelian group $G$ by repeated addition in $G$.

  \item Any ring $R$ acts on its own additive group $(R,+)$
  by multiplication in $R$.
  We denote this module by $\Module{R}{R}$.

  \item Any ring $R$ acts on the set of all $t$-vectors over $R$, denoted by $R^t$, 
  by scalar multiplication.
  When $R$ is a field, this module is a vector space.

  \item If $\Module{G}{R}$ is a module,
  then the ring of all $t \times t$ matrices with entries in $R$, denoted $M_t(R)$,
  acts on the group, $G^t$, of all $t$-vectors over $G$
  via matrix-vector multiplication
  where multiplication of elements of $R$ with elements of $G$ is given by the action of $\Module{G}{R}$.
  A special case of this module, $\Module{G^t}{M_t(R)}$, occurs when $G = (R,+)$,
  in which case matrices over $R$
  act on vectors over $R$
  via matrix-vector multiplication over $R$.
\end{itemize}

If $R$ is a ring,
a function $f: R^m \to R$ of the form
$$f(x_1,\dots,x_m) = a_1 \, x_1 + \cdots + a_m \, x_m$$
where $a_1,\dots,a_m \in R$,
is a (left) \textit{one-sided linear function}
with respect to both the ring $R$
and the left module $\Module{R}{R}$.%
\footnote{
Every right one-sided linear function with respect to a ring or a right module
can be written as 
a corresponding left one-sided linear function with respect to
a left module with the same Abelian group.
Hence, in this paper, it suffices for us to exclusively use left one-sided linear functions.
}
A function $f': R^m \to R$ of the form
\begin{align}
  f'(x_1,\dots,x_m) = \sum_{i=1}^m \sum_{j=1}^{n_i} a_{i,j} \, x_i \, b_{i,j} \label{eq:two-sided-1}
\end{align}
where $a_{i,j}, b_{i,j} \in R$,
is a \textit{two-sided linear function}
with respect to $R$.
When $R$ is commutative, every two-sided linear function
is also a one-sided linear function,
since in a commutative ring,
$$\sum_{i=1}^m \sum_{j=1}^{n_i} a_{i,j} \, x_i \, b_{i,j} = \sum_{i=1}^m \left( \sum_{j=1}^{n_i} a_{i,j} \, b_{i,j} \right) \, x_{i}.$$
However, 
left and right multiplication are not necessarily the same in a non-commutative ring,
so the class of two-sided linear functions is broader than the class of one-sided linear functions.

\begin{example}
Let $R$ be the (non-commutative) ring of all $2 \times 2$ matrices over a field.
The function $f: R \to R$ given by
\begin{align*}
f\left( \left[\begin{array}{cc}
  x_{1,1} & x_{1,2} \\
  x_{2,1} & x_{2,2} 
\end{array} \right] \right)
&=
\left[\begin{array}{cc}
  1 & 0 \\
  0 & 0
\end{array} \right]
\,
\left[\begin{array}{cc}
  x_{1,1} & x_{1,2} \\
  x_{2,1} & x_{2,2} 
\end{array} \right] 
\,
\left[\begin{array}{cc}
  1 & 0 \\
  0 & 0
\end{array} \right]
+
\left[\begin{array}{cc}
  0 & 0 \\
  0 & 1
\end{array} \right]
\,
\left[\begin{array}{cc}
  x_{1,1} & x_{1,2} \\
  x_{2,1} & x_{2,2} 
\end{array} \right] 
\,
\left[\begin{array}{cc}
  0 & 0\\
  0 & 1
\end{array} \right]
\\
&= \left[\begin{array}{cc}
  x_{1,1} & 0 \\
  0 & x_{2,2} 
\end{array} \right]
\end{align*}
is a two-sided linear function over $R$.
It can be verified that, for all $A,B \in R$,
the function $f(X)$ is not the function $A X B$.
By allowing for sums of $X$ terms multiplied by coefficients on both the left and the right,
a broader class of functions can be attained
than with a single $X$ term multiplied by coefficients on the left and the right.
This also implies $f(X)$ cannot be written as a (left or right) one-sided linear function.
\label{ex:two-sided}
\end{example}

In the remainder of this section,
we will show that two-sided linear functions over rings
can be written as one-sided linear functions with respect to some module,
i.e., $f'$ in \eqref{eq:two-sided-1} can be written as
$$f'(x_1,\dots,x_m) =  c_1 \act x_1 + \cdots + c_m \act x_m$$
where $c_1,\dots,c_m$ are elements of some other ring
that acts on $R$.
In order to do so, we exploit module tensor products.
If $G$ and $H$ are each $R$-modules,
then the tensor product of $G$ and $H$
is a third $R$-module
that satisfies properties similar 
to the constructed vector space in the following example.

\begin{example}
Suppose $\F$ is a field
and $U \subseteq \F^m $ and $V \subseteq \F^n$ are vector spaces.
For each $u \in U$ and $v \in V$,
define the $mn$ vector $(u,v)$ by
$$ (u,v) =
\left[ \begin{array}{c}
  u_1 v_1 \\
  \vdots \\
  u_1 v_n \\
  \vdots \\
  u_m v_1 \\
  \vdots \\
  u_m v_n
\end{array} \right].$$
It is easily verified that
for all $u,u' \in U$, all $v,v' \in V$, and all $\alpha \in \F$,
\begin{align*}
  (u, v) + (u', v) &= (u + u', \, v) \\
  (u, v) + (u, v') &= (u, \, v + v') \\
    \alpha \, (u,v) &= (\alpha u, \, v)\\
   \alpha \, (u,v) &= (u, \, \alpha v).
\end{align*}
The subspace of $\F^{mn}$ generated by all vectors of the form
$(u,v)$ for some $u \in U$ and some $v \in V$
is isomorphic to the tensor product of $U$ and $V$.
In general, this tensor product space
differs from the direct product space
$U \times V\subseteq \F^{m+n}$ 
obtained by concatenating 
vectors from $U$ with vectors from $V$.
In fact, when $U = \F^m$ and $V = \F^n$,
the tensor product space is $\F^{mn}$,
whereas the direct product space is $\F^{m+n}$.
\label{ex:vector-product}
\end{example}

If $R$ is a ring
and $E$ is a set,
the \textit{free $R$-module generated by $E$}
is denoted $R^{(E)}$.
In this module,
the group is
the subset of the Cartesian product 
$\displaystyle\prod_{e \in E} R$ 
consisting only of the elements that have finitely many non-zero components together with component-wise addition,
and the ring $R$ acts on $R^{(E)}$ component-wise.
By mapping the element $e \in E$
to the vector in $R^{(E)}$
whose $e$th component is $1$
and all other components are $0$,
we can view $R^{(E)}$ as the set of all 
finite $R$-linear combinations of elements of $E$.
In other words, 
every element of $R^{(E)}$ can be uniquely written as
$\dsum_{e \in E} a_e \, e$,
where only finitely many $a_e \in R$ are non-zero,
so the set $E$
is a basis for $R^{(E)}$.

If $G$ is an $R$-module
and $N$ is a subgroup of $G$ that is closed under the action of $R$,
then $N$ is a \textit{submodule} of $G$.
The quotient group $G/N$ is
also an $R$-module
(e.g., see \cite[p. 348]{Dummit-Algebra}).
If $E$ is a subset of $G$,
then the \textit{submodule generated by $E$}
is 
$$
\{r_1 e_1 + \cdots + r_m e_m \; : \; m \in \mathbb{N}, \, r_1,\dots,r_m \in R, \, e_1,\dots,e_m \in E\}.
$$

Now let $R$ be a commutative ring,
let $G$ and $H$ be $R$-modules,
and let $N$ be the submodule of $R^{(G \times H)}$ generated by the set
$$\left\{ \begin{array}{l}
  (g,h) + (g',h) - (g+g',h), \\
         (g,h') + (g,h) - (g,h+h'), \\
         r \, (g,h) - (r  g, h), \\
         r \, (g,h) - (g, r \, h) 
    \end{array}
         \; : \;
         g,g' \in G, \, h,h' \in H, \, r \in R
    \right\}.$$
The \textit{tensor product module of $\Module{G}{R}$ and $\Module{H}{R}$},
denoted $G \otimes_R H$,
is the quotient $R$-module $R^{(E)} / N$.
In other words,
$G \otimes_R H$ is the set of equivalence classes
of the congruence generated by the following relations on $R^{(E)}$:
\begin{align*}
   (g,h) + (g',h) & = (g+g', h) \\
   (g,h) + (g,h') & = (g, h+h') \\
    r\, (g,h) &= (r \, g,h) \\
   r\, (g,h) &= (g,r \, h).
\end{align*}
The tensor product module is unique up to isomorphism
(e.g., see \cite[Sections 10.1 -- 10.4]{Dummit-Algebra} for more information on modules and tensor products) and
exhibits similar properties to the tensor product of vector spaces.
The elements of $G \otimes_R H$ are called \textit{tensors}
and can be written (non-uniquely, in general)
as sums of equivalence class representatives:
$(g_1,h_1) + \cdots + (g_m, h_m)$,
for some positive integer $m$ and 
$(g_1,h_1),\dots,(g_m,h_m) \in G \times H$.

\begin{definition}
Let $R$ and $S$ be finite rings, and let $\Z$ denote the ring of integers.
The \textit{tensor product ring $R \otimes S$}
is the Abelian group $R \otimes_{\Z} S$
together with multiplication given by
$$\left( \sum_{i=1}^m  (r_i, s_i) \right)
* \left( \sum_{i=j}^n  (r_j', s_j') \right)
= \sum_{i=1}^m \sum_{j=1}^n (r_i r_j', \, s_i s_j')$$
for all $\big( \sum_{i=1}^m (r_i, s_i) \big),
\big( \sum_{i=j}^n (r_j', s_j') \big) 
\in R \otimes_{\Z} S$.
\end{definition}
This tensor product ring is well defined
and unique up to isomorphism
(e.g., see \cite[Chapter 10.4, Proposition 21]{Dummit-Algebra}).
As an example,
if $\Z_m$ and $\Z_n$ denote the rings of integers modulo $m$ and $n$, respectively,
then
we have 
$\Z_m \otimes \Z_n \cong \Z_{\GCD{m}{n}}$
(e.g., see \cite[p. 369]{Dummit-Algebra}).
Specifically, if $m = 4$ and $n = 2$, then the tensors in $\Z_4 \otimes \Z_2$ are such that
$$(0,0) = (0,1) = (2,1) = (1,0) = (2,0) = (3,0) 
 \; \; \text{ and } \; \; (1,1) = (3,1)$$
and addition and multiplication are isomorphic to addition and multiplication in $\Z_2$.

We also comment that the \textit{direct product ring}
$R \times S$ with component-wise addition and multiplication 
is generally not isomorphic to the tensor product ring $R \otimes S$.
As an example, 
if $m$ and $n$ are relatively prime,
then by the Chinese remainder theorem, $\Z_m \times \Z_n \cong \Z_{mn}$
(e.g., see \cite[p. 267]{Dummit-Algebra}),
whereas $\Z_{m} \otimes \Z_n \cong \Z_1$ is the trivial ring.

For a finite ring $R$,
the \textit{opposite ring},
denoted $R^{op}$, is the additive group of $R$
with multiplication taken in the opposite order,
i.e., $a *_{op} b = b a$, for all $a,b \in R$.
The tensor product ring
$R \otimes R^{op}$
acts on $(R,+)$ via
$$\left(\sum_{i=1}^n (a_i, b_i) \right) \cdot r = \sum_{i=1}^n a_i \, r \, b_i$$
for all $a_1,\dots,a_n,b_1,\dots,b_n,r \in R$.
In other words, $R\otimes R^{op}$
acts on $(R,+)$ by computing \textit{two-sided} linear combinations of elements of $(R,+)$.
We denote this module by
$\Module{R}{R \otimes R^{op}}$.
The properties of tensor addition and multiplication are natural in the context of this module.
In particular, for all $a,a',b,b',x \in R$, and $n \in \Z$, we have
\begin{align*}
  \left( (a, b) + (a', b) \right) \act x 
    &= a \, x \, b + a' \, x \, b
    = (a + a') \, x \, b 
    = (a+a', \, b) \act x
    \\
  \left( (a, b) + (a, b') \right) \act x 
    &= a \, x \, b + a \, x \, b'
    = a \, x \, (b + b') 
    = (a, \, b+b') \act x
    \\
  n \, (a,b) \act x 
    &= n \, (a \, x \, b)  
    = (n \, a) \, x \, b
    = (n a, b) \act x 
    \\
  n \, (a,b) \act x 
    &= n \, (a \, x \, b)  
    = a \, x \, (n \, b)
    = (a, n b) \act x .
\end{align*}

The two-sided linear function $f'$ in \eqref{eq:two-sided-1}
can now be written as
$$f'(x_1,\dots,x_m) = \sum_{i=1}^m \left( \sum_{j=1}^{n_i} (a_{i,j} , b_{i,j}) \right) \act x_i$$
which is a one-sided linear function
with respect to the $R \otimes R^{op}$-module $R$.
This shows that \textit{one-sided} linearity over left modules
generalizes \textit{two-sided} linearity over rings.

\begin{example}
Let $R$ be the (non-commutative) ring of all $2 \times 2$ matrices over a field.
The two-sided linear function $f:R \to R$
from Example~\ref{ex:two-sided}
can be written as a one-sided linear function over 
the $R \otimes R^{op}$-module $R$ as
\begin{align*}
f\left( \left[\begin{array}{cc}
  x_{1,1} & x_{1,2} \\
  x_{2,1} & x_{2,2} 
\end{array} \right] \right)
&=
\left(
\left(\left[\begin{array}{cc}
  1 & 0 \\
  0 & 0
\end{array} \right]
  , \;
\left[\begin{array}{cc}
  1 & 0 \\
  0 & 0
\end{array} \right] \right)
+
\left( \left[\begin{array}{cc}
  0 & 0 \\
  0 & 1
\end{array} \right]
  , \;
\left[\begin{array}{cc}
  0 & 0\\
  0 & 1
\end{array} \right] \right)
\right)
\act
\left[\begin{array}{cc}
  x_{1,1} & x_{1,2} \\
  x_{2,1} & x_{2,2} 
\end{array} \right]
\\
\\
&= \left[\begin{array}{cc}
  x_{1,1} & 0 \\
  0 & x_{2,2} 
\end{array} \right].
\end{align*}
\end{example}

\subsection{Network Coding Model}
  \label{sec:model}

A \textit{network} will refer to a finite, directed, acyclic multigraph,
some of whose nodes are \textit{sources} or \textit{receivers}.
Source nodes generate 
\textit{message vectors} 
whose components are arbitrary elements 
of a fixed, finite set of size at least $2$,
called an \textit{alphabet}.
The elements of an alphabet are called \textit{symbols}.
We will denote the cardinality of an alphabet $\A$ by $|\A|$.
The \textit{inputs} to a node are the message vectors, 
if any, originating at the node
and the symbols on the incoming edges of the node.
Each outgoing edge of a network node
has associated with it an \textit{edge function}
that maps the node's inputs
to the 
vector of symbols carried by the edge, 
called the \textit{edge vector}.
Each receiver node has \textit{decoding functions}
that map the receiver's inputs
to a vector of alphabet symbols in an attempt to 
recover the receiver's \textit{demands},
which are the message vectors the receiver wishes to obtain.

In a network with $m$ message vectors,
a \textit{$(k_1,\dots,k_m,n)$ code over an alphabet $\A$}
(also called a \textit{fractional} code)
is an assignment of edge functions to the edges in the network
and an assignment of decoding functions to the receivers in the network
such that the $i$th message vector is an element of $\A^{k_i}$
and the edge vectors are elements of $\A^n$.
The \textit{rate vector} of a $(k_1,\dots,k_m,n)$ network code is $\mathbf{r} = (k_1/n,\dots,k_m/n)$.
A fractional code is a \textit{solution} if
each receiver recovers its demanded message vector
from its inputs,
and a rate vector $\mathbf{r}$
is \textit{achievable for a network} 
if the network has a fractional solution
with rate vector $\mathbf{r}$
over some alphabet.

\newpage

\subsection{Linearity over Finite Rings and Modules}
\label{ssec:ring-linearity}

A function 
$f: G^{s} \to G^t$
is \textit{linear with respect to the module $\Module{G}{R}$} if
it can be written as a matrix-vector product,
$f(\mathbf{x}) =
A\mathbf{x}
$,
where 
\begin{itemize}
\itemsep0em
  \item $A$ is a $t \times s$ matrix with elements from $R$,
  \item multiplication of elements of $R$ by elements of $G$
  is the action of the module.
\end{itemize}

A fractional code is \textit{linear over the $R$-module $G$} if 
the message vectors and edge vectors have components from $G$
and all edge functions and decoding functions are linear over the module.
For each network node, 
the vector $\mathbf{x} \in G^s$ is a concatenation of all the input vectors of the node.
In other words, the network alphabet is $G$,
and the outgoing edge vectors and decoded symbol vectors 
at a node
are linear combinations of the node's vector inputs,
where the coefficients describing the linear combination are from $R$.
We use modules as a tool to prove results related to linear coding over rings,
since linear network coding over modules generalizes linear network coding over rings and fields.
The module approach is especially useful for non-commutative rings with two-sided linear codes.

If $R$ is a finite ring,
then a fractional linear code over the module $\Module{R}{R \otimes R^{op}}$
is said to be a \textit{fractional two-sided linear code over $R$}.
In particular, the network alphabet is $R$,
and the outgoing edge vectors and decoded symbols
carry linear combinations of the node's input components,
where each input component in the combination is multiplied on the left and right by constants from $R$.
If $R$ is commutative, then then a fractional two-sided linear code over $R$
is also a fractional linear code over the module $\Module{R}{R}$,
since one-sided and two-sided linearity are equivalent in this case.
A rate vector $\mathbf{r}$
is \textit{linearly achievable for a network over a finite ring $R$}
if the network has a fractional two-sided linear solution over $R$
with rate vector $\mathbf{r}$.

\subsection{Rate Regions, Capacity, and Solvability}

The \textit{\rateregion{}}
of a network $\Network$ is
$$ \Region{\Network} =\{\mathbf{r} \in \Q^m \, : \, \mathbf{r} \text{ is achievable for } \Network \},%
  \footnote{Some authors refer to the \rateregions{} and \linearrateregions{} of networks as ``capacity regions'' or ``achievable rate regions''
  and sometimes define them as the convex hull or the topological closure of the set.
  We compare a network's \linearrateregions{}
  over finite rings to its \linearrateregions{} over finite fields,
  and our results immediately extend to these alternate definitions of \rateregions.
  }$$
the \textit{capacity} (also known as the ``uniform capacity'' or the ``symmetric capacity'') 
is
$$ \UniCapacity{\Network} = \text{sup } \{ r \in \Q \, : \, (r,\dots,r) \text{ is achievable for } \Network  \} ,$$
the \textit{\linearrateregion{}}
with respect to a ring alphabet $R$ is
$$ \LinRegion{\Network}{R} = \{\mathbf{r} \in \Q^m \, : \, \mathbf{r} \text{ is linearly achievable for $\Network$ over } R \}  ,$$
and
the \textit{linear capacity}
with respect to a ring alphabet $R$ is
$$ \UniLinCap{\Network}{R} = \text{sup } \{ r \in \Q \, : \, (r,\dots,r) \text{ is linearly achievable for $\Network$ over } R \} .$$
%

While the emphasis of this paper 
is on \rateregions{} and capacities of networks,
we define several solvability properties,
as they will be useful in proving our main results.
A $(k_1,\dots,k_m,n)$ code,
for which $k_1=\dots = k_m = n = t$, 
is also called a \textit{$t$-dimensional vector} code,
i.e., the block size of every message and edge is $t$,
and a $1$-dimensional vector code is called a \textit{scalar} code.
A network is said to be 
\begin{itemize}\itemsep0em
  \item \textit{solvable}
  if it has a scalar solution over some alphabet,

  \item \textit{scalar linearly solvable over $\Module{G}{R}$}
  if it has a scalar linear solution over the module $\Module{G}{R}$,
  and 
  \item \textit{vector linearly solvable over $\Module{G}{R}$}
  if it has a $t$-dimensional vector linear solution over the module $\Module{G}{R}$, for some $t \ge 1$.
\end{itemize}
Special cases of scalar and vector linear solvability over modules
include scalar and vector linear solvability over rings,
in which case the module is $\Module{R}{R\otimes R}$
(or equivalently, $\Module{R}{R}$, if $R$ is commutative).
The all-one's vector is an achievable rate vector for any solvable network.
We also comment that if a network has a $t$-dimensional vector solution over some alphabet $\A$,
then it has a (possibly non-linear) scalar solution over the alphabet $\A^t$,
so the network is solvable.

\subsection{Related Work}
  \label{sec:prior}

In 2000,
Ahlswede, Cai, Li, and Yeung \cite{Ahlswede-NIF}
showed that some networks
can attain higher capacities by using linear coding
at network nodes, rather than just using routing operations.
Since then,
many results on linear network coding over finite fields have been achieved.
On the other hand, the theoretical potential and limitations
of linear network coding over non-field alphabets
has been much less understood.

Li, Yeung, and Cai \cite{Li-Linear}
showed that when each of a network's 
receivers demands all of the messages
(i.e., a \textit{multicast} network),
the linear capacity over any finite field
is equal to the (nonlinear) capacity.
Ho et. al
\cite{Ho-Random} 
showed that for multicast networks, 
random fractional linear codes over finite fields
achieve the network's capacity with probability approaching one
as the block sizes increase.
Jaggi et. al
\cite{Jaggi-Algs}
developed polynomial-time algorithms
for constructing capacity-achieving fractional linear codes over finite fields
for multicast networks.
Algorithms for constructing fractional linear solutions over finite fields
for other classes of networks have also been a subject of considerable interest
(e.g., \cite{Ebrahimi-Algorithms}, \cite{Huang-Three}, \cite{Xu-Double}, and \cite{Zeng-Z}).
%

It is known (e.g., \cite{DFZ-Insufficiency}) 
that for general networks,
fractional linear codes over finite fields do not necessarily attain the network's capacity.
In fact, it was shown by
Lovett \cite{Lovett-Approximate}
that, in general, 
fractional linear network codes over finite fields 
cannot even approximate the capacity to any constant factor.
Blasiak, Kleinberg, and Lubetzky
\cite{Blasiak-Lexicopgraphic}
demonstrated a class of networks
whose capacities are
larger than their linear capacities over any finite field,
by a factor that grows polynomially with the number of messages.
Langberg and Sprintson \cite{Langberg-Hardness}
showed that, for general networks, 
constructing fractional solutions
whose rates even approximate the capacity 
to any constant factor is NP-hard.

It was shown in \cite{Cannons-Routing}
that the capacity of a network
is independent of the coding alphabet.
However, 
there are multiple examples in the literature
(e.g., \cite{CZ-NonLinear}, \cite{DFZ-Insufficiency}, \cite{DFZ-CharacteristicInequalities})
of networks whose
linear capacity over a finite field can depend on the field alphabet,
specifically by way of the characteristic of the field.
Muralidharan and Rajan \cite{Muralidharan-Polymatroids}
demonstrated that a
fractional linear solution over a finite field $\F$ exists for a network 
if and only if 
the network is associated with
a discrete polymatroid representable over $\F$.
Linear rank inequalities of vector subspaces
and linear information inequalities (e.g., \cite{Yeung-LinearInformation})
are known to be closely related
and have been shown to be useful in determining or bounding
networks' linear capacities over finite fields
(e.g., \cite{DFZ-Regions}, \cite{DFZ-CharacteristicInequalities}, and \cite{Gomez-LinearInequalities}).

Chan and Grant \cite{ChanGrant-EntropyFunctions} 
demonstrated
a duality between entropy functions and \rateregions{} of networks
and provided an alternate proof 
that fractional linear codes over finite fields 
do not necessarily attain the capacity.
The relationship between network \rateregions{}
and entropy functions has been further studied,
for example, in
\cite{ChanGrant-Capacity},
\cite{Harvey-Capacity},
\cite{Song-ZeroError},
and
\cite{Yan-Characterization}.
It has also been shown (e.g., \cite{DFZ-NonShannon})
that non-Shannon information inequalities
may be needed to determine the capacity of a network.

It was shown in \cite{ChanGrant-EntropyFunctions}
that fractional linear network codes over finite rings (and modules)
are special cases of codes generated by Abelian groups.
However, most other studies of linear capacity have generally 
been restricted to finite field alphabets.
We will consider the case where the coding alphabet is viewed, more generally, as a finite ring.

We recently showed in \cite{CZ-Commutative} and \cite{CZ-NonCommutative} that
scalar linear network codes over finite rings
can offer solvability advantages
over scalar linear network codes over finite fields
in certain cases.
Some of the results from these papers 
will be used in proofs in the present paper.

\subsection{Main Results}
  \label{sec:outline}

The remainder of the paper is outlined as follows.

In Section~\ref{sec:fractional},
we explore a connection between fractional linear codes
and vector linear codes,
which allows us to exploit
network solvability results over modules \cite{CZ-Commutative, CZ-NonCommutative}
in order to achieve capacity results over rings.
For a given network $\Network$ and rate vector $\mathbf{r}$,
we show (in Lemma~\ref{lem:kn-to-scalar}) 
there exists a network $\Network'$ that is vector linearly solvable over a given module
if and only if 
the rate vector $\mathbf{r}$ is linearly achievable for $\Network$ over the module.
In Section~\ref{ssec:dominance},
we order finite modules based on fractional solvability
and show that under certain conditions,
fractional linear solutions over a given module
imply the existence of fractional linear solutions over other modules.
The results in Sections~\ref{ssec:dominance}
and \ref{ssec:matrix-rings}
are used to show (in Lemma~\ref{lem:min-module})
that fractional linear solutions over modules
imply the existence of fractional linear solutions over modules
in which the ring of matrices over a field acts on vectors over the field.

In Section~\ref{sec:fields},
we use the results relating solvability and fractional codes 
from Section~\ref{sec:fractional}
to show our main results on \linearrateregions{} over fields.
We prove (in Theorem~\ref{thm:char-not-contained})
that for any two finite fields with different characteristics,
there exists a network whose linear rate regions over the fields
are not contained in one another.
This indicates that 
some rate vectors may only be linearly achievable over certain fields,
while other rate vectors may only be linearly achievable over other fields.
Additionally, 
for any two finite fields with different characteristics,
there exists a network whose linear capacities over the two fields are different
(Corollary~\ref{cor:char-unequal}).

We also show (in Theorem~\ref{thm:char-p})
that for any finite fields with the same characteristic,
every network's linear rate regions over the fields are equal.
In other words,
the \linearrateregion{} of any network over a field
depends only on the characteristic of the field.
Consequently, the linear capacity of any network over a field 
depends only on the characteristic of the field as well
(Corollary~\ref{cor:char-p}).
This contrasts with linear solvability over fields,
since scalar linear solvability can depend not only on the field's characteristic,
but more specifically, on the precise cardinality of the field
(e.g., see \cite[\CommLemmaFieldNoDominance]{CZ-Commutative}, \cite{Sun-BaseField}, \cite{Sun-FieldSize}).

In Section~\ref{sec:rings},
we prove our main results on linear rate regions and linear capacities
over finite rings.
We show (in Theorem~\ref{thm:ring-lin-cap}) that for any network, any finite field, 
and any finite ring whose size is divisible by the field's characteristic,
the network's \linearrateregion{} over the ring is contained within 
the network's \linearrateregion{} over the field,
and consequently the network's linear capacity over the ring is at most its linear capacity over the field
(Corollary~\ref{cor:ring-lin-capacity}).
In this sense, it suffices to restrict attention to finite fields
when choosing a coding alphabet from among all rings.
In other words, the general class of rings does not provide any benefit 
over the restricted class of finite fields,
in terms of achieving \linearrateregions{} with network coding.
In order to prove Theorem~\ref{thm:ring-lin-cap},
we show (in Theorem~\ref{thm:higher-dim-char-p})
that whenever a network has a fractional linear solution over some module
with a given rate vector,
the network has a fractional linear solution over some field
with the same rate vector
and potentially larger block sizes.

Even though Theorem~\ref{thm:ring-lin-cap} asserts non-field rings cannot provide
an increase in linear capacity over fields for \textit{all} networks,
we show (in Corollary~\ref{cor:ring-capacity-r-s})
that generally certain rings, 
smaller than a given field,
can increase the linear capacity over at least \textit{some} (but not all) networks.
In fact, we show (in Theorem~\ref{thm:ring-cap-iff})
that for any finite field and any finite ring,
there exists a network with higher linear capacity over the ring than over the field
if and only if the field's size and the ring's size are relatively prime.
Finally, we show (in Corollary~\ref{cor:module-asymptotic}) 
that whenever a network has a fractional linear solution over some ring (or module)
with a uniform rate arbitrarily close to $1$,
the network must also have a fractional linear solution over some field
with the same uniform rate.
This strengthens results in \cite{CZ-NonLinear} and \cite{DFZ-Insufficiency}
by showing that the non-linearly solvable networks presented in these papers
additionally are not asymptotically linearly solvable over rings and modules.

\clearpage

\section{Fractional and Vector Codes over Modules}
  \label{sec:fractional}

\newcommand{\butterflyFig}{
\psfrag{x1,...,xk}{\small$x_1,\ldots,x_{k_x}$}
\psfrag{y1,...,yk}{\small$y_1,\ldots,y_{k_y}$}
\psfrag{e1}{\small$e_1$}
\psfrag{en}{\small$e_{n}$}
\psfrag{n}{\small$n$}
\psfrag{e}{\small$e$}
\psfrag{x,y}{\small$x,y$}

\psfrag{0}{\small$S$}
\psfrag{1}{\footnotesize$u_1$}
\psfrag{2}{\footnotesize$u_2$}
\psfrag{3}{\footnotesize$u_3$}
\psfrag{4}{\footnotesize$u_4$}
\psfrag{5}{\scriptsize$R_1$}
\psfrag{6}{\scriptsize$R_2$}

\begin{figure}[ht]
  \begin{center}
    \leavevmode
    \hbox{\epsfxsize=0.3\textwidth\epsffile{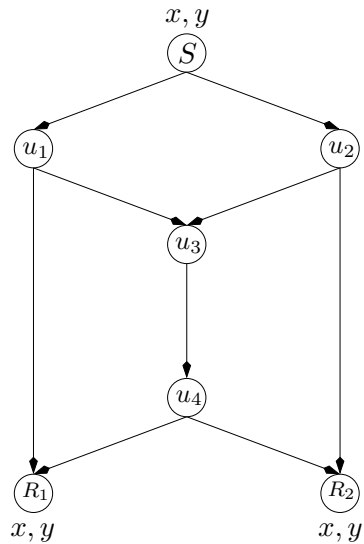}}
  \end{center}
  \caption{
    The Butterfly network has a single source node $S$,
    which generates message vectors $x$ and $y$.
    Each of the receiver nodes $R_1$ and $R_2$ demands both $x$ and $y$.
    The \linearrateregion{} of the Butterfly network is 
    $\{(r_x,r_y) \in \Q^2 \; : \; r_x,r_y \ge 0 \text{ and } r_x + r_y \le 2\}$ over any ring.
  }
\label{fig:butterfly}
\end{figure}}

\butterflyFig

Many techniques for upper bounding
network linear capacities over finite fields
(e.g., \cite{DFZ-Insufficiency,DFZ-Regions,CZ-NonLinear})
exploit linear algebra results
that sometimes do not extend to matrices over arbitrary rings.
For example, it is known (e.g., see \cite{Gupta-Transpose}) that 
the transpose of an invertible matrix over a non-commutative ring
is not necessarily invertible.%
\footnote{See \cite{Brown-Matrices}
and \cite{McDonald-LinearAlgebra}
for more information on linear algebra over rings.}
This suggests that directly computing network \linearrateregions{} and linear capacities 
over finite rings and modules
may be somewhat difficult.

One method for determining whether a network satisfies some
solvability or capacity property
is to transform the question into whether a certain
related network satisfies a corresponding property
(e.g., \cite{Kamath-Two}, \cite{Wong-Capacity}, and \cite{Wong-LinCap}).
Namely, in \cite{Wong-Capacity} and \cite{Wong-LinCap},
the authors show that determining the \rateregion{} and \linearrateregion{} of a general network
can be reduced to determining the \rateregion{} and \linearrateregion{} of a corresponding
network where each message vector is demanded by exactly one receiver (i.e., a \textit{multiple unicast} network).
In \cite{Kamath-Two},
it is shown that determining whether a multiple unicast network has a solution with a given rate vector
can be reduced to determining whether 
a corresponding unicast network with two message-receiver pairs 
has a solution with a corresponding rate vector.

We use a similar approach to relate the existence of fractional linear solutions over modules
to scalar and vector linear solvability over modules
(which was studied in \cite{CZ-Commutative} and \cite{CZ-NonCommutative}).
The results in this section
allow us to more easily relate a network's \linearrateregion{} over a ring
to the network's \linearrateregion{} over some field.

\subsection{Fractional Equivalent Network}

For any network $\Network$ with $m$ message vectors 
and integers $k_1,\dots,k_m \ge 0$ and $n \ge 1$, the following defines a new network
which is vector linearly solvable over a module $\Module{G}{R}$ if and only if
$\Network$ has a fractional linear solution over $\Module{G}{R}$ whose rate vector is $(k_1/n,\dots,k_m/n)$.
We prove this fact in Lemma~\ref{lem:kn-to-scalar}.
This network construction can be used to show many linear solvability properties
extend to the existence of fractional linear solutions.

\begin{definition}
  For any network $\Network$ with $m$ message vectors 
  and any integers $k_1,\dots,k_m \ge 0$ and $n \ge 1$,
  let $\knNetwork{k_1,\dots,k_m,n}$ denote the network $\Network$
  but with 
  \begin{itemize}\itemsep0em
    \item[(i)] each edge replaced with $n$ parallel edges, and
    \item[(ii)] the $i$th message vector replaced with $k_i$ message vectors.
  \end{itemize}
  \label{def:kn-network}
\end{definition}

\newcommand{\knButterflyFig}{
\begin{figure}[ht]
  \begin{center}
    \leavevmode
    \hbox{\epsfxsize=0.35\textwidth\epsffile{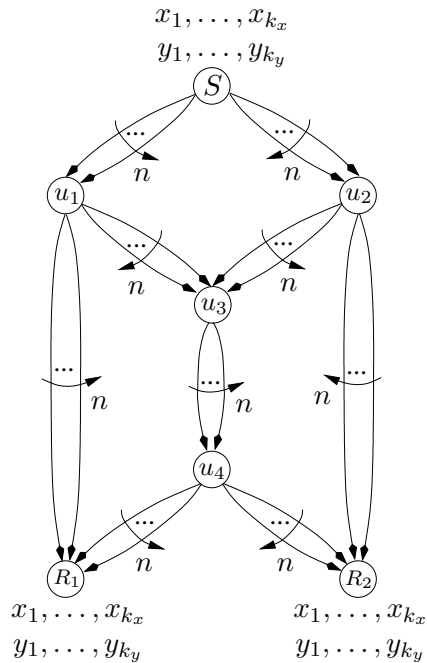}}
  \end{center}
  \caption{
    The $(k_x,k_y,n)$-Butterfly network has a single source node,
    which generates message vectors $x_1,\dots,x_{k_x}$ and $y_1,\dots,y_{k_y}$.
    Each receiver demands all of the message vectors.
    The $(k_x,k_y,n)$-Butterfly network
    is vector linearly solvable over a given ring
    if and only if $k_x+k_y \le 2n$.
  }
\label{fig:kn-butterfly}
\end{figure}}

\knButterflyFig

The \textit{Butterfly network} is defined in Figure~\ref{fig:butterfly},
and, for each $k_x,k_y \ge 0$ and $n \ge 1$, 
the \textit{$(k_x,k_y,n)$-Butterfly network} is defined in Figure~\ref{fig:kn-butterfly}.
These networks are consistent with Definition~\ref{def:kn-network},
if they are denoted by $\Network$ and $\knNetwork{k_x,k_y,n}$, respectively.

\newpage
\begin{lemma}
  Let $\Network$ be a network with $m$ message vectors,
  let $k_1,\dots,k_m \ge 0$ and $n,t \ge 1$ be integers,
  let $\Module{G}{R}$ be a module,
  and let $\knNetwork{k_1,\dots,k_m,n}$ denote
  the network in Definition~\ref{def:kn-network}
  corresponding to $\Network$ and $k_1,\dots,k_m$ and $n$.
  The network $\Network$ has a $(tk_1,\dots,tk_m,tn)$ linear solution over $\Module{G}{R}$
  if and only if 
  $\knNetwork{k_1,\dots,k_m,n}$ has a $t$-dimensional vector linear solution over $\Module{G}{R}$.
  \label{lem:kn-to-scalar}
\end{lemma}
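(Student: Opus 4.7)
The plan is to establish a natural bijection between the two types of codes by viewing the message and edge vectors in $\Network$ as concatenations of the message and edge vectors in $\knNetwork{k_1,\dots,k_m,n}$, and viewing every coefficient matrix as a block matrix whose blocks correspond to this partition.

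First, I would prove the forward direction. Suppose $\Network$ has a $(tk_1,\dots,tk_m,tn)$ linear solution over $\Module{G}{R}$. For each original message $i$, partition the $tk_i$-symbol message vector into $k_i$ consecutive blocks of $t$ symbols each; declare these to be the $k_i$ message vectors associated with message $i$ in $\knNetwork{k_1,\dots,k_m,n}$. Similarly, for each original edge $e$, partition the $tn$-symbol edge vector into $n$ consecutive blocks of $t$ symbols each and assign them to the $n$ parallel edges in $\knNetwork{k_1,\dots,k_m,n}$. Any edge function in $\Network$ has the form $\mathbf{y}=A\mathbf{x}$ where $\mathbf{y}\in G^{tn}$ and $\mathbf{x}$ is a concatenation of inputs whose total length is some $tN$ (with $N$ the total number of input blocks of size $t$). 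Viewing $A$ as an $n\times N$ block matrix with $t\times t$ blocks over $R$, the $j$th block-row gives a linear function (over $\Module{G}{R}$) producing the $j$th length-$t$ block of $\mathbf{y}$, which we take as the edge function of the $j$th parallel copy in $\knNetwork{k_1,\dots,k_m,n}$. The decoder functions are decomposed the same way. Since each receiver of $\knNetwork{k_1,\dots,k_m,n}$ demands exactly the size-$t$ pieces of its original demand, recoverability transfers blockwise.

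For the reverse direction, suppose $\knNetwork{k_1,\dots,k_m,n}$ has a $t$-dimensional vector linear solution over $\Module{G}{R}$. For each original edge $e$ of $\Network$, concatenate the $n$ length-$t$ edge vectors on the parallel copies into a single length-$tn$ vector; for each original message, concatenate the $k_i$ length-$t$ message vectors into a single length-$tk_i$ vector. Each edge function in $\knNetwork{k_1,\dots,k_m,n}$ has the form of a $t\times N$ matrix over $R$ acting on the concatenation of its inputs. Stacking the $n$ such matrices for the $n$ parallel copies of an edge produces an $(tn)\times(tN)$ matrix over $R$ (viewed as an $n\times N$ block matrix with $t\times t$ blocks), which we take as the edge function of $e$ in $\Network$. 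Stacking the decoder matrices likewise recovers the full length-$tk_i$ demanded message. Correctness is again immediate block by block.

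The one thing to be slightly careful about is that the \emph{inputs} to a node in each network match up correctly under the bijection. In $\Network$, a node's input vector is the concatenation of (a) the size-$tk_i$ vectors for each message it generates and (b) the size-$tn$ vectors on its in-edges; in $\knNetwork{k_1,\dots,k_m,n}$, after identifying the corresponding $k_i$ length-$t$ messages and $n$ length-$t$ parallel-edge vectors, the concatenation of inputs at the same node has total length $tk_i+\dots+tn+\dots$, identical to that in $\Network$, and the coordinate assignment is just a fixed (solution-independent) permutation of component indices. So the block-matrix correspondence on the edge and decoder functions goes through unambiguously. The main work is really only bookkeeping: verifying that under the permutation above, a matrix over $R$ acts linearly on $G^{tN}$ via the module action if and only if each of its $t\times t$ block-rows does. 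This is immediate from the definition of linearity over $\Module{G}{R}$ as matrix-vector multiplication, and is the only nontrivial point to verify.
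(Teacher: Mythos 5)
Your proposal is correct and follows essentially the same route as the paper's proof: both directions are handled by splitting the $tn\times(\cdot)$ edge matrices (and $tk_j\times(\cdot)$ decoding matrices) into $n$ (respectively $k_j$) block-rows of height $t$ assigned to the parallel edges and demanded messages, and conversely by stacking. The bookkeeping point you flag about the inputs matching up under a fixed concatenation is exactly what the paper handles implicitly by forming $\mathbf{x}$ as the concatenation of the node's input vectors in both networks.
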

\begin{proof}
  In a $(tk_1,\dots,tk_m,tn)$ linear code over module $\Module{G}{R}$ for network $\Network$,
  suppose a node generates
  the $l_1$th,$\dots,l_u$th message vectors and has $v$ incoming edges,
  where the $i$th message vector is an element of $G^{t k_{i}}$
  and the edge vectors are elements of $G^{t n}$.
  Then an edge function 
  $$f \, : \,
  \underbrace{G^{tk_{l_1}} \times \cdots \times G^{tk_{l_u}}}_{u \text{ message vectors}} 
  \times 
  \underbrace{G^{tn} \times \cdots \times G^{tn} }_{v \text{ edge vectors}}
  \longrightarrow G^{tn}$$
  of an outgoing edge of the node
  is of the form $f(\mathbf{x}) = A  \mathbf{x}$
  where $A$ is a $tn \times (tk_1{l_1} + \cdots + t k_{l_u} + v tn)$ matrix with entries in $R$
  and $\mathbf{x}$ is a vector over $G$ formed by concatenating the input vectors of the node.
  Let $A_1,\dots,A_n$ denote the
  $t \times (tk_1{l_1} + \cdots + t k_{l_u} + v tn)$ matrices
  such that $A$ can be written in block form as
  $$A = \left[ \begin{array}{c}
    A_1\\ 
    \vdots \\ 
    A_n
  \end{array}\right].$$
  The corresponding node in $\knNetwork{k_1,\dots,k_m,n}$
  generates $k_{l_1} + \cdots + k_{l_u}$ message vectors
  and has $v n$ incoming edge vectors.
  Define the $t$-dimensional vector code for $\knNetwork{k_1,\dots,k_m,n}$ over $\Module{G}{R}$
  by letting the edge function of the $i$th parallel corresponding outgoing edge
  be the linear mapping
  $$f_i \, : \, \underbrace{G^t \times \cdots \times G^t}_{k_{l_1}+\cdots+k_{l_u} \text{ message vectors}}
  \!\! \!\! \times \; \;
  \underbrace{G^{t} \times \cdots \times G^{t} }_{v n \text{ edge vectors}}
  \longrightarrow G^t$$
  given by $f_i(\mathbf{x}) = A_i \mathbf{x}$, where $i = 1,\dots,n$.
  The edge in the code for $\Network$ carries the same linear combination of its inputs
  as the $n$ parallel edges in the code for $\knNetwork{k_1,\dots,k_m,n}$.

  Similarly, 
  in a $(tk_1,\dots,tk_m,tn)$ code for $\Network$,
  suppose a receiver generates
  the $l_1$th,$\dots,l_u$th message vectors, has $v$ incoming edges,
  and demands $x_j$.
  Then the decoding function 
  $$d \, : \,
  \underbrace{G^{tk_{l_1}} \times \cdots \times G^{tk_{l_u}}}_{u \text{ message vectors}} 
  \times 
  \underbrace{G^{tn} \times \cdots \times G^{tn} }_{v \text{ edge vectors}}
  \longrightarrow G^{tk_j}$$
  corresponding to $x_j$
  is of the form $f(\mathbf{x}) = D  \mathbf{x}$
  where $D$ is a $tk_j \times (tk_1{l_1} + \cdots + t k_{l_u} + v tn)$ matrix
  and $\mathbf{x}$ is a vector over $G$ formed by concatenating the input vectors of the node.
  Let $D_1,\dots,D_{k_j}$ denote the
  $t \times (tk_1{l_1} + \cdots + t k_{l_u} + v tn)$ matrices
  such that $D$ can be written in block form as
  $$D = \left[ \begin{array}{c}
    D_1\\ 
    \vdots \\ 
    D_{k_j}
  \end{array}\right].$$
  The corresponding node in $\knNetwork{k_1,\dots,k_m,n}$
  generates $k_{l_1} + \cdots + k_{l_u}$ message vectors,
  has $v n$ incoming edge vectors,
  and demands the $k_j$ message vectors corresponding to $x_j$.
  Define the $t$-dimensional vector code for $\knNetwork{k_1,\dots,k_m,n}$ over $\Module{G}{R}$
  by letting the decoding function, corresponding to the $i$th such message vector,
  be the linear mapping
  $$d_i \, : \, \underbrace{G^t \times \cdots \times G^t}_{k_{l_1}+\cdots+k_{l_u} \text{ message vectors}}
  \!\! \!\! \times \; \;  
  \underbrace{G^{t} \times \cdots \times G^{t} }_{v n \text{ edge vectors}}
  \longrightarrow G^t$$
  given by $d_i(\mathbf{x}) = D_i \mathbf{x}$, where $i = 1,\dots,k_j$.
  If the function $d$ correctly reproduces its demanded message vectors in the $(tk_1,\dots,tk_m,tn)$ code for $\Network$,
  then each of $d_1,\dots,d_{k_j}$ correctly reproduces its demanded message vector in the $t$-dimensional code for 
  $\knNetwork{k_1,\dots,k_m,n}$.
  Hence, any $(tk_1,\dots,tk_m,tn)$ linear solution over a module $\Module{G}{R}$ for $\Network$
  can be translated to a $t$-dimensional vector linear solution over $\Module{G}{R}$ for $\knNetwork{k_1,\dots,k_m,n}$.
  
  A $t$-dimensional vector linear solution over the module $\Module{G}{R}$ for $\knNetwork{k_1,\dots,k_m,n}$
  can similarly be translated to a $(tk_1,\dots,tk_m,tn)$ linear solution over $\Module{G}{R}$ for $\Network$.
  In particular,
  if $f_1,\dots,f_n$ are the edge functions of the $n$ parallel edges at a node
  in a $t$-dimensional vector linear solution for $\knNetwork{k_1,\dots,k_m,n}$,
  then in the $(tk_1,\dots,tk_m,tn)$ linear code over for $\Network$, 
  define the corresponding edge function 
  to be $$f(\mathbf{x}) 
    = \left[\begin{array}{c}
      f_1(\mathbf{x}) \\
      \vdots \\
      f_n(\mathbf{x})
    \end{array} \right]. $$
  Similarly,
  if $d_1,\dots,d_{k_j}$ are the decoding functions at a node
  in a $t$-dimensional vector linear solution for $\knNetwork{k_1,\dots,k_m,n}$,
  then in the $(tk_1,\dots,tk_m,tn)$ linear code over for $\Network$, 
  define the corresponding decoding function $d(\mathbf{x})$
  to be the vector obtained by concatenating 
  $d_1(\mathbf{x}),\dots,d_{k_j}(\mathbf{x})$.
  This $(tk_1,\dots,tk_m,tn)$ linear code for $\Network$ over $\Module{G}{R}$
  is a solution,
  since the $t$-dimensional vector linear code for $\knNetwork{k_1,\dots,k_m,n}$ is a solution.
\end{proof}

When $\Module{G}{R}$
is a module and $t$ is a positive integer,
$\Module{G^t}{M_t(R)}$
denotes the module
in which the ring of all $t \times t$ matrices with entries in $R$
acts on the set of all $t$-vectors over $G$
with matrix-vector multiplication,
where multiplication of elements of $R$ with elements of $G$ is given by the action of $\Module{G}{R}$.
The following lemma shows an equivalence between fractional linear codes over modules
and fractional linear codes over these vector modules.

\begin{lemma}
  Let $\Module{G}{R}$ be a module,
  let $\Network$ be a network,
  and let $k_1,\dots,k_m \ge 0$ and $n,t \ge 1$ be integers.
  Network $\Network$ has a $(k_1,\dots,k_m,n)$ linear solution over
  $\Module{G^t}{M_t(R)}$
  if and only if
  $\Network$ has a $(t k_1,\dots,t k_m, tn)$ linear solution over $\Module{G}{R}$.
  \label{lem:vector_module}
\end{lemma}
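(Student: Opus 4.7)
The plan is to exploit a direct block-matrix correspondence: a matrix over the ring $M_t(R)$ acting on a vector over the group $G^t$ is, entry-for-entry, the same as a block matrix over $R$ (with $t\times t$ blocks) acting on the flattened vector over $G$. Under this identification, every $(k_1,\dots,k_m,n)$ linear code over $\Module{G^t}{M_t(R)}$ translates to a $(tk_1,\dots,tk_m,tn)$ linear code over $\Module{G}{R}$ and vice versa, and the correctness of decoding is preserved because the two codes compute identical functions on identical data.

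To make this precise I would first fix the natural bijection $\phi_s : (G^t)^s \to G^{ts}$ that concatenates the $s$ blocks of length $t$. Then for any $p \times q$ matrix $A$ with entries $A_{ij} \in M_t(R)$, let $\Phi(A)$ be the $tp \times tq$ matrix over $R$ obtained by writing each $A_{ij}$ as its $t \times t$ block. A short check using the definitions of matrix-vector multiplication in the two modules shows $\phi_p\bigl(A\mathbf{x}\bigr) = \Phi(A)\,\phi_q(\mathbf{x})$ for every $\mathbf{x} \in (G^t)^q$; this uses nothing beyond the action of $R$ on $G$ as specified in the module $\Module{G^t}{M_t(R)}$ described right before the statement. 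The map $\Phi$ is clearly a bijection between $p \times q$ matrices over $M_t(R)$ and $tp \times tq$ block matrices over $R$ whose blocks are $t \times t$; note however that on the $R$-side, any $tp \times tq$ matrix (not just those of block form) is allowed, so the correspondence is one-way for matrices — which is exactly why we can translate codes in both directions.

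For the forward direction, given a $(k_1,\dots,k_m,n)$ linear solution over $\Module{G^t}{M_t(R)}$, at each node with input sizes summing to $s$ vectors over $G^t$, each outgoing edge function has the form $\mathbf{x}\mapsto A\mathbf{x}$ for some $A$ with entries in $M_t(R)$; replace it by the linear function over $\Module{G}{R}$ given by $\Phi(A)$ acting on $\phi(\mathbf{x})$, and similarly for each decoding function. The identity $\phi_p(A\mathbf{x})=\Phi(A)\phi_q(\mathbf{x})$ shows the transformed code computes the same symbols and therefore is a $(tk_1,\dots,tk_m,tn)$ linear solution over $\Module{G}{R}$. For the converse, a $(tk_1,\dots,tk_m,tn)$ linear solution over $\Module{G}{R}$ assigns to every edge/decoder a matrix $B$ over $R$ of size $tn \times t(\cdots)$, which can be cut into $t \times t$ blocks to produce a matrix $A = \Phi^{-1}(B)$ over $M_t(R)$; again the compatibility identity ensures the resulting code over $\Module{G^t}{M_t(R)}$ computes the same outputs, hence is a solution of rates $(k_1,\dots,k_m,n)$.

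The only subtlety, and the place where care is needed, is bookkeeping: when a node has multiple inputs (say $u$ messages and $v$ incoming edges), the concatenated input vector has a specific blocked structure, and one must check that flattening before or after concatenation gives the same vector in $G^{t(\cdots)}$ so that the block-matrix translation of the node's function is well defined. This is a routine but slightly tedious consistency check; once it is in place, the two directions of the lemma follow immediately from the identity $\phi_p(A\mathbf{x})=\Phi(A)\phi_q(\mathbf{x})$ applied to every edge function and every decoding function in the code.
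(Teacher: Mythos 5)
Your proposal is correct. The identity you isolate --- that a $p \times q$ matrix over $M_t(R)$ acting on $(G^t)^q$ computes exactly the same function as its $tp \times tq$ block expansion over $R$ acting on the flattened vector in $G^{tq}$, and that conversely every matrix arising in a $(tk_1,\dots,tk_m,tn)$ code over $\Module{G}{R}$ has both dimensions divisible by $t$ and so can be re-cut into $t \times t$ blocks --- is precisely the fact the paper relies on when it asserts that scalar linear codes over $\Module{G^t}{M_t(R)}$ coincide with $t$-dimensional vector linear codes over $\Module{G}{R}$. The difference is in the packaging: the paper proves the lemma as a chain of three equivalences, invoking Lemma~\ref{lem:kn-to-scalar} twice to pass between fractional solutions of $\Network$ and scalar/vector solutions of the auxiliary network $\knNetwork{k_1,\dots,k_m,n}$, with the block-matrix identification applied only in the scalar-versus-$t$-dimensional setting (and stated there without detail). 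You instead apply the identification directly to the fractional code on $\Network$ itself, which makes the proof self-contained (no dependence on Lemma~\ref{lem:kn-to-scalar}) and supplies explicitly the compatibility check $\phi_p(A\mathbf{x}) = \Phi(A)\,\phi_q(\mathbf{x})$ that the paper leaves implicit; the cost is the concatenation-order bookkeeping you flag, which the paper's reduction to the scalar case of the auxiliary network largely sidesteps. Both arguments are sound and rest on the same underlying observation.
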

\begin{proof}
  This lemma follows from the fact that a scalar linear solution over $\Module{G^t}{M_t(R)}$
  is equivalent to a $t$-dimensional vector linear solution over $\Module{G}{R}$.
  In particular,
  in both a scalar linear code over $\Module{G^t}{M_t(R)}$
  and a $t$-dimensional vector linear code over $\Module{G}{R}$,
  inputs to a node are $t$-vectors over $G$
  and outputs carry linear combinations of the inputs,
  where the coefficients that describe the linear combination are $t \times t$ 
  matrices over $R$.
  Any scalar linear solution over $\Module{G^t}{M_t(R)}$
  can be translated to a $t$-dimensional vector linear solution over $\Module{G}{R}$
  and vice versa.
  This idea generalizes to fractional linear solutions:
  \begin{align*}
    &\text{$\Network$ has a $(k_1,\dots,k_m,n)$ linear solution over $\Module{G^t}{M_t(R)}$} 
      \\
    & \Longleftrightarrow \;
    \text{$\knNetwork{k_1,\dots,k_m,n}$ has a scalar linear solution over $\Module{G^t}{M_t(R)}$}
     & & \Comment{Lemma~\ref{lem:kn-to-scalar}}
      \\
    & \Longleftrightarrow \;
    \text{$\knNetwork{k_1,\dots,k_m,n}$ has a $t$-dimensional linear solution over $\Module{G}{R}$}
      \\
    & \Longleftrightarrow \;
    \text{$\Network$ has a $(tk_1,\dots,tk_m,tn)$ linear solution over $\Module{G}{R}$}
     & & \Comment{Lemma~\ref{lem:kn-to-scalar}}.
  \end{align*}

\end{proof}

\subsection{Fractional Dominance} \label{ssec:dominance}

\begin{definition}
  Let $\Module{G}{R}$ and $\Module{H}{S}$ be modules.
  We say that
  \begin{itemize}
  \item[(a)]
  $\Module{H}{S}$ \textit{\scaDom{}} $\Module{G}{R}$
  if every network 
  with a scalar linear solution over $\Module{G}{R}$
  also has a scalar linear solution over $\Module{H}{S}$,

  \item[(b)]
  $\Module{H}{S}$ \textit{\fracDom{}} $\Module{G}{R}$
  if for each $k_1,\dots,k_m \ge 0$ and $n \ge 1$, 
  every network 
  with a $(k_1,\dots,k_m,n)$ linear solution over $\Module{G}{R}$
  also has a $(k_1,\dots,k_m,n)$ linear solution over $\Module{H}{S}$.
  \end{itemize}
  \label{def:Dominance}
\end{definition}

\begin{remark}
  Any left-sided fractional linear code over a ring
  can be viewed as a two-sided fractional linear code over the ring
  in which the inputs are multiplied on the right by the identity element,
  so
  the module $\Module{R}{R \otimes R^{op}}$
  \fracDom{} $\Module{R}{R}$
  for every finite ring $R$.
  Furthermore, if $R$ is commutative,
  then any two-sided fractional linear code over $R$ can equivalently be written
  as a left-sided fractional linear code over $R$,
  which implies $\Module{R}{R}$ \fracDom{} $\Module{R}{R \otimes R^{op}}$.
\end{remark}

We also comment that if $R$ and $S$ are finite rings
such that
$\Module{S}{S \otimes S^{op}}$ \fracDom{}
$\Module{R}{R \otimes R^{op}}$,
then for each network $\Network$, we have
$$\LinRegion{\Network}{S}
\supseteq
\LinRegion{\Network}{R}
\; \; 
\text{ and }
\; \;
\UniLinCap{\Network}{S}
\ge
\UniLinCap{\Network}{R}.$$

The following lemma shows that \scalardominance{}
and \fractionaldominance{}
of modules are, in fact, equivalent.
However, it is cleaner to prove results on \scalardominance,
as the block sizes of the message vectors and edge vectors
are all one,
and we can use results from \cite{CZ-NonCommutative}.

\begin{lemma}
  Let $\Module{G}{R}$ and $\Module{H}{S}$ be modules.
  $\Module{H}{S}$ \scaDom{} $\Module{G}{R}$ if and only if 
  $\Module{H}{S}$ \fracDom{} $\Module{G}{R}$.
  \label{lem:scalar-dominance}
\end{lemma}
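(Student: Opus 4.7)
The plan is to prove the two directions separately, using Lemma~\ref{lem:kn-to-scalar} as the essential tool. Neither direction looks deep: scalar linear solutions are a special case of fractional linear solutions, so the nontrivial direction is to bootstrap scalar dominance up to fractional dominance, and the $\knNetwork{k_1,\dots,k_m,n}$ construction is designed precisely to reduce fractional linear solvability of $\Network$ to scalar linear solvability of an auxiliary network. No genuine obstacle is expected.

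For the easy direction (fractional $\Rightarrow$ scalar), I would observe that a scalar linear solution over a module $\Module{G}{R}$ is exactly a $(1,\dots,1,1)$ linear solution over $\Module{G}{R}$, so fractional dominance, applied with $k_1=\cdots=k_m=n=1$, immediately yields scalar dominance.

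For the main direction (scalar $\Rightarrow$ fractional), suppose $\Module{H}{S}$ scalarly dominates $\Module{G}{R}$, and let $\Network$ be any network with a $(k_1,\dots,k_m,n)$ linear solution over $\Module{G}{R}$. Applying Lemma~\ref{lem:kn-to-scalar} with $t=1$, the auxiliary network $\knNetwork{k_1,\dots,k_m,n}$ has a scalar linear solution over $\Module{G}{R}$. By the scalar dominance hypothesis, $\knNetwork{k_1,\dots,k_m,n}$ also has a scalar linear solution over $\Module{H}{S}$. Applying Lemma~\ref{lem:kn-to-scalar} once more, this time in the reverse direction and again with $t=1$, yields a $(k_1,\dots,k_m,n)$ linear solution for $\Network$ over $\Module{H}{S}$. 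Since $k_1,\dots,k_m,n$ were arbitrary, $\Module{H}{S}$ fractionally dominates $\Module{G}{R}$.

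The only place where one could imagine trouble is in checking that Lemma~\ref{lem:kn-to-scalar} applies symmetrically to both modules $\Module{G}{R}$ and $\Module{H}{S}$, but the lemma was stated for an arbitrary module, so this is immediate. The whole proof is therefore a short round trip through the $\knNetwork{k_1,\dots,k_m,n}$ construction, and I expect no computational or conceptual obstacle beyond citing Lemma~\ref{lem:kn-to-scalar} twice.
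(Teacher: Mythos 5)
Your proposal is correct and follows essentially the same route as the paper: the easy direction is immediate from the definition, and the main direction is the same round trip through $\knNetwork{k_1,\dots,k_m,n}$ via Lemma~\ref{lem:kn-to-scalar} with $t=1$, with the scalar dominance hypothesis applied in the middle.
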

\begin{proof}
  It follows immediately from the definition that
  $\Module{H}{S}$ \fracDom{} $\Module{G}{R}$
  implies $\Module{H}{S}$ \scaDom{} $\Module{G}{R}$.
  To prove the converse,
  suppose $\Module{H}{S}$ \scaDom{} $\Module{G}{R}$.
  Let $\Network$ be a network with $m$ message vectors, 
  let $k_1,\dots,k_m \ge 0$ and $n \ge 1$ be integers,
  and let $\knNetwork{k_1,\dots,k_m,n}$ be the network in Definition~\ref{def:kn-network}
  corresponding to $\Network$, $k_1,\dots,k_m$, and $n$.
  Then
  \begin{align*}
    & \text{$\Network$ has a $(k_1,\dots,k_m,n)$ linear solution over $\Module{G}{R}$} \\
    & \implies \;
     \text{$\knNetwork{k_1,\dots,k_m,n}$ has a scalar linear solution over $\Module{G}{R}$}
      & & \Comment{Lemma~\ref{lem:kn-to-scalar}}
      \\
    & \implies \;
     \text{$\knNetwork{k_1,\dots,k_m,n}$ has a scalar linear solution over $\Module{H}{S}$}
      & & \Comment{$S$ \scaDom{} $R$}
      \\
    & \implies \;
       \text{$\Network$ has a $(k_1,\dots,k_m,n)$ linear solution over $\Module{H}{S}$}
      & & \Comment{Lemma~\ref{lem:kn-to-scalar}}.
  \end{align*}
  Hence, for any network, 
  any fractional linear solution over $\Module{G}{R}$
  implies the existence of a fractional linear solution over $\Module{H}{S}$
  with the same block sizes.
\end{proof}

\begin{definition}
An $R$-module $G$ is \textit{faithful} if 
for each $r \in R\backslash\{0\}$,
there exists $g \in G$
such that $r \act g \ne 0$.
\label{def:faithful}
\end{definition}

Lemmas~\ref{lem:same_ring}, \ref{lem:faithful_module}, and \ref{lem:ModHomomorphism}
follow immedately from Lemma~\ref{lem:scalar-dominance} and results from \cite{CZ-NonCommutative},
and we include their proofs in the appendix for reference.
Lemma~\ref{lem:same_ring}
shows that, for a fixed ring $R$, 
fractional linear solutions over faithful $R$-modules
induce fractional linear solutions over every other $R$-module.
Lemma~\ref{lem:faithful_module} shows that fractional linear solutions over non-faithful modules
induce fractional linear solutions over some faithful module.
Lemma~\ref{lem:ModHomomorphism} shows that ring homomorphisms
also induce fractional dominance.

\begin{lemma}
  Let $R$ be a fixed ring,
  let $G$ be a faithful $R$-module,
  and let $H$ be an $R$-module.
  Then $\Module{H}{R}$ \fracDom{} $\Module{G}{R}$.
  \label{lem:same_ring}
\end{lemma}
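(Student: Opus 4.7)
The plan is to reduce to scalar dominance via Lemma~\ref{lem:scalar-dominance}, then transport the solution coefficients verbatim from $\Module{G}{R}$ to $\Module{H}{R}$ using faithfulness. By Lemma~\ref{lem:scalar-dominance}, it suffices to show that $\Module{H}{R}$ \scaDom{} $\Module{G}{R}$. So fix a network $\Network$ with $m$ message vectors and a scalar linear solution over $\Module{G}{R}$.

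First I exploit acyclicity: processing nodes in a topological order, substituting each edge's incoming values repeatedly and using the module axiom $(r*s)\act g = r\act(s\act g)$ shows that every edge symbol can be written as a fixed $R$-linear combination $\sum_{i=1}^m c_{e,i}\act x_i$ of the source messages $x_1,\dots,x_m$, and similarly every decoder output at every receiver can be written as $\sum_{i=1}^m d_i\act x_i$ for some $d_i\in R$ determined by composing the code's coefficients. For a receiver demanding $x_j$, the solution property requires
$$\sum_{i=1}^m d_i\act x_i = x_j \qquad \text{for all } x_1,\dots,x_m\in G.$$

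Next I lift this identity from $G$ to $R$ using faithfulness. Setting $x_\ell = 0$ for $\ell\ne i$ and letting $x_i$ range over $G$ forces $(d_i - \delta_{i,j})\act g = 0$ for every $g\in G$; by Definition~\ref{def:faithful}, this implies $d_i = \delta_{i,j}$ as elements of $R$. Thus the correctness conditions for each decoder, originally only required to hold as identities of functions $G^m\to G$, in fact hold as identities in the ring $R$ itself.

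Finally I reuse the same $R$-coefficients to build a scalar linear code over $\Module{H}{R}$: each edge function and decoding function is the same $R$-linear combination as before, now acting on elements of $H$. The same ring-level substitution yields decoder output $\sum_{i=1}^m \delta_{i,j}\act x_i = x_j$ over $H$ as well, so $\Network$ is scalar linearly solvable over $\Module{H}{R}$. The main obstacle is the faithfulness step; it is the only place where anything about $G$ beyond its $R$-module structure is used, and in the non-commutative setting one must track the order of multiplications carefully when chaining the topological substitutions — which is precisely the content of the \cite{CZ-NonCommutative} lemmas referenced just before the statement.
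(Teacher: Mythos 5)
Your proposal is correct and follows essentially the same route as the paper's proof: write every edge and decoder output as a fixed $R$-linear combination of the messages by composing coefficients through the acyclic network, use faithfulness of $\Module{G}{R}$ to force the decoder coefficients to equal $\delta_{i,j}$ in $R$ itself, transport the same coefficients to $\Module{H}{R}$, and invoke Lemma~\ref{lem:scalar-dominance} to upgrade scalar dominance to fractional dominance. Your derivation of $(d_i-\delta_{i,j})\act g=0$ for all $g$ is just a slightly more explicit phrasing of the paper's faithfulness step.
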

\newcommand{\LemSameRingProof}{
\begin{proof}[Proof of Lemma~\ref{lem:same_ring}]
  Let $\Network$ be a network that is scalar linearly solvable over 
  the faithful $R$-module $(G,\oplus)$.
  Any scalar linear solution for $\Network$ over the $R$-module $(G,\oplus)$
  is a scalar linear solution for $\Network$ over any other $R$-module.

  To see this,
  let $z_1,\dots,z_m \in G$ denote the messages of $\Network$,
  and suppose a node in $\Network$ has inputs 
  $x_1,\dots,x_n \in G$ in a scalar linear solution over $\Module{G}{R}$, 
  where, for each $i = 1,\dots,n$,
  $$x_i = \bigoplus_{j=1}^{m} (B_{i,j} \act z_j)$$
  for some $B_{i,1},\dots,B_{i,m} \in R$.
  Then for each output $y \in G$ of this node,
  there exist constants $C_1,\dots,C_n \in R$ such that
  \begin{align*}
    y & = \bigoplus_{i=1}^{n} (C_{i} \act x_i) \\
      & =\bigoplus_{i=1}^{n} \bigoplus_{j=1}^{m} ((C_{i} B_{i,j}) \act z_j)\\
      & = \bigoplus_{j=1}^{m} \left(\left( \sum_{i = 1}^{n} C_i B_{i,j} \right) \act z_j \right).
  \end{align*}

  Now let $H$ be any $R$-module with action $\newaction$, and
  suppose the corresponding inputs to the node in the scalar linear code over $\Module{H}{R}$
  are $x_1',\dots,x_n' \in H$ and can be written in terms of the messages
  $z_1',\dots,z_m' \in H$ in the following way
  $$x_i' = \bigoplus_{j=1}^{m} (B_{i,j} \newaction z_j').$$
  Then the corresponding output $y' \in R$ of the node is of the form
  \begin{align*}
    y' & = \bigoplus_{i=1}^{n} (C_{i} \newaction x_i')\\
      & = \bigoplus_{i=1}^{n} \bigoplus_{j=1}^{m} ((C_{i} B_{i,j}) \newaction z_j') \\
      & = \bigoplus_{j=1}^{m} \left(\left( \sum_{i = 1}^{n} C_i B_{i,j} \right) \newaction z_j' \right)
  \end{align*}
  so by induction, every edge and decoding function in the scalar linear code over $\Module{H}{R}$
  is the same linear combination of the messages
  as in the scalar linear solution over $\Module{G}{R}$.
  
  $G$ is a faithful $R$-module, 
  so $1$ and $0$ are the only elements of $R$ such that $1 \act g = g$ and $0 \act g = 0$ for all $g \in G$.
  Hence it must be the case that decoding functions in the scalar linear solution over $\Module{G}{R}$ 
  are of the form
  $$(1 \act z_i) \oplus \bigoplus_{\substack{j = 1 \\ j \ne i}}^n (0 \act z_j) = z_i$$
  so it must be the case
  that the corresponding decoding function in scalar the linear code over $\Module{H}{R}$
  is
  $$(1 \newaction z_i') \oplus \bigoplus_{\substack{j = 1 \\ j \ne i}}^n (0 \newaction z_j') = z_i'.$$
  Hence, each receiver can linearly recover its demands,
  so the scalar linear code over $\Module{H}{R}$ is, in fact, a solution.
  This implies that $\Module{H}{R}$ \scaDom{} $\Module{G}{R}$,
  which along with Lemma~\ref{lem:scalar-dominance},
  shows that $\Module{H}{R}$ \fracDom{} $\Module{G}{R}$.
\end{proof}}

In \cite{CZ-NonCommutative},
an example was given in which
a network has a scalar linear solution over a non-faithful $R$-module
but does not have any scalar linear solutions over another $R$-module.
This shows the importance of the faithfulness of the module in Lemma~\ref{lem:same_ring}.

\begin{lemma}
  Let $G$ be an $R$-module.
  There exists a finite ring $S$ such that
  $G$ is a faithful $S$-module,
  and $\Module{G}{S}$ \fracDom{} $\Module{G}{R}$.
  \label{lem:faithful_module}
\end{lemma}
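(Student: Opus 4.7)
The plan is to form a quotient ring of $R$ by the annihilator of $G$ and then verify that this quotient acts faithfully and inherits every scalar linear solution. Explicitly, let
\[
I \;=\; \{ r \in R \,:\, r \act g = 0 \text{ for all } g \in G \}
\]
be the annihilator of $G$ in $R$. A direct check using the module axioms shows $I$ is a two-sided ideal of $R$: for $r \in I$ and $s \in R$, $(sr) \act g = s \act (r \act g) = 0$ and $(rs) \act g = r \act (s \act g) = 0$. Set $S = R/I$, which is a ring, and define an action of $S$ on $G$ by $(r + I) \act g = r \act g$; this is well defined by the definition of $I$.

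Next I would argue that $S$ is finite, which is the one step to be slightly careful about since the paper allows the ring of a module to be infinite. Each $r \in R$ determines a function $\varphi_r : G \to G$ given by $g \mapsto r \act g$, and two elements of $R$ induce the same function on $G$ if and only if their difference lies in $I$. Hence $|S| = |R/I|$ is at most the number of functions $G \to G$, which is $|G|^{|G|}$, a finite number. Faithfulness of the $S$-module $G$ is immediate: if $r + I \ne 0$ in $S$, then $r \notin I$, so there exists $g \in G$ with $r \act g \ne 0$, i.e., $(r + I) \act g \ne 0$.

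For fractional dominance, by Lemma~\ref{lem:scalar-dominance} it suffices to prove $\Module{G}{S}$ scalarly dominates $\Module{G}{R}$. Given a scalar linear solution of a network $\Network$ over $\Module{G}{R}$, every edge function and decoding function is of the form
\[
(x_1,\dots,x_n) \;\longmapsto\; \bigoplus_{i=1}^n r_i \act x_i
\]
for some $r_1,\dots,r_n \in R$. Replace each coefficient $r_i$ by its equivalence class $r_i + I \in S$; the induced function on $G^n$ is identical to the original, since the $S$-action on $G$ is defined precisely so that $(r+I) \act g = r \act g$. Hence every edge carries the same group element as before, and each receiver still recovers its demanded message. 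This produces a scalar linear solution over $\Module{G}{S}$, giving scalar and therefore fractional dominance.

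The main (mild) obstacle is the finiteness argument for $S$, because the definition of a module in this paper does not require the scalar ring to be finite; the argument via the finite set of action maps $\varphi_r : G \to G$ handles this cleanly. The rest is routine, and the key conceptual point is that quotienting by the annihilator throws away exactly the redundancy that prevents $G$ from being faithful, without changing any actual linear combination appearing in a code.
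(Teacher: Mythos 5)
Your proof is correct and follows essentially the same route as the paper's: quotient $R$ by the annihilator ideal of $G$, observe that distinct cosets induce distinct maps $G \to G$ (hence $S$ is finite and acts faithfully), replace each coefficient in a scalar linear solution by its coset to get scalar dominance, and invoke Lemma~\ref{lem:scalar-dominance} to upgrade to fractional dominance. No gaps.
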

\newcommand{\lemFaithfulModuleProof}{
\begin{proof}[Proof of Lemma~\ref{lem:faithful_module}]
  We use ideas from \cite[p. 2750]{DFZ-Insufficiency} here.
  Let $$J = \{r \in R \; : \; r \act g = 0, \; \forall g \in G\}$$
  which is easily verified to be a two-sided ideal of $R$.
  Let $S = R/J$.
  It can also be verified that
  $G$ is an $S$-module with action $\newaction:S \times G \to G$ given by
  $$(r + J) \newaction g = r \act g.$$
  If $(r + J), (s + J) \in S$ are such that
  $$(r + J) \newaction g = (s + J) \newaction g$$
  for all $g \in G$,
  then $(r-s) \act g = 0$,
  which implies $(r - s) \in J$.
  Hence $(r + J) = (s + J)$,
  so the ring $S$ acts faithfully on $G$.
  A faithful module requires different elements of the ring
  to yield different functions when acting on elements of the group.
  Since $G$ is finite,
  the number of such functions must be finite,
  which implies the ring $S$ must also be finite.

  Suppose a network $\Network$ is scalar linearly solvable over $\Module{G}{R}$.
  Every output $y'$ in the solution over $\Module{G}{R}$ is of the form
  \begin{align}
  y' = (C_1 \act x_1) \oplus \cdots \oplus (C_m \act x_m)
  \label{eq:t}
  \end{align}
  where the $x_i$'s are the parent node's inputs and the $C_i$'s are constants from $R$.
  Form a linear code over $\Module{G}{S}$ replacing each coefficient $C_i$ in \eqref{eq:t}
  by $(C_i + J)$.
  Let $y$ be the edge symbol in the code over $\Module{G}{S}$
  corresponding to $y'$ in the code over $\Module{G}{R}$.
  Then
  \begin{align*}
  y &= ((C_1 + J) \newaction x_1) \oplus \cdots \oplus ((C_m + J) \newaction x_m)  \\
    &= (C_1 \act x_1) \oplus \cdots \oplus (C_m \act x_m) = y'.
  \end{align*}
  Thus, whenever an edge function in the solution over $\Module{G}{R}$ outputs the symbol $y'$,
  the corresponding edge function in the code over $\Module{G}{S}$ will output the same symbol $y'$.
  Likewise, whenever $x$ is an input to an edge function in the solution over $\Module{G}{R}$,
  the corresponding input of the corresponding edge function in the code over $\Module{G}{S}$ 
  will be the same symbol $x$.
  The same argument holds for the decoding functions in the code over $\Module{G}{S}$, so each
  receiver will correctly obtain its corresponding demands in the code over $\Module{G}{S}$.
  Hence, the code over $\Module{G}{S}$ is a linear solution for $\Network$.

  This implies $\Module{G}{S}$ \scaDom{} $\Module{G}{R}$,
  which along with Lemma~\ref{lem:scalar-dominance},
  implies $\Module{G}{S}$ \fracDom{} $\Module{G}{R}$
\end{proof}}

A \textit{ring homomorphism} is a mapping $\phi$
from a ring $R$ to a ring $S$ such that for all $a,b \in R$
\begin{align*}
  \phi(a + b) &= \phi(a) + \phi(b) \\
  \phi(ab) &= \phi(a) \phi(b) \\
  \phi(1_R) &= 1_S
\end{align*}
where $1_R$ and $1_S$ are the multiplicative identities of $R$ and $S$, respectively.
It follows from this definition that $\phi(0_R) = 0_S$,
where $0_R$ and $0_S$ are the additive identities of $R$ and $S$, respectively.

\begin{lemma}
  Let $\phi: R \to S$ be a ring homomorphism,
  let $G$ be a faithful $R$-module,
  and let $H$ be an $S$-module.
  Then $\Module{H}{S}$ \fracDom{} $\Module{G}{R}$.
  \label{lem:ModHomomorphism}
\end{lemma}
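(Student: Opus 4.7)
The plan is to reduce to scalar dominance via Lemma~\ref{lem:scalar-dominance} and then combine the homomorphism $\phi$ with Lemma~\ref{lem:same_ring}. By Lemma~\ref{lem:scalar-dominance}, it suffices to show that $\Module{H}{S}$ \scaDom{} $\Module{G}{R}$, so I would start with an arbitrary network $\Network$ that has a scalar linear solution over $\Module{G}{R}$ and build from it a scalar linear solution over $\Module{H}{S}$.

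The key device is to pull back the $S$-action on $H$ along $\phi$ to produce an auxiliary $R$-module structure on $H$. Concretely, define a new action $\newaction : R \times H \to H$ by $r \newaction h := \phi(r) \act h$. The three homomorphism identities $\phi(r+s)=\phi(r)+\phi(s)$, $\phi(rs)=\phi(r)\phi(s)$, and $\phi(1_R)=1_S$, together with the fact that $\Module{H}{S}$ is an $S$-module, imply that the module axioms of Definition~\ref{def:module} hold for $\newaction$. Denote the resulting $R$-module by $\Module{H}{R}$.

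Since $G$ is a faithful $R$-module, Lemma~\ref{lem:same_ring} gives $\Module{H}{R}$ \fracDom{} $\Module{G}{R}$, and hence (by the forward direction of Lemma~\ref{lem:scalar-dominance}) also \scaDom{} $\Module{G}{R}$. This yields a scalar linear solution for $\Network$ over $\Module{H}{R}$ in which every edge function and decoding function has the form $\bigoplus_i r_i \newaction x_i$ with $r_i \in R$. Now replace each coefficient $r_i$ by $\phi(r_i) \in S$; by the definition of $\newaction$, the edge and decoded symbols satisfy $r_i \newaction x_i = \phi(r_i) \act x_i$, so the resulting scalar linear code over $\Module{H}{S}$ computes identical symbols on every input and therefore decodes correctly. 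Thus $\Network$ has a scalar linear solution over $\Module{H}{S}$, establishing $\Module{H}{S}$ \scaDom{} $\Module{G}{R}$, and a final application of Lemma~\ref{lem:scalar-dominance} upgrades this to $\Module{H}{S}$ \fracDom{} $\Module{G}{R}$.

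The only subtle point is checking that the pullback $\newaction$ genuinely endows $H$ with an $R$-module structure, which requires each of the three homomorphism identities; once that is verified, the rest is a straight chain through Lemma~\ref{lem:same_ring} and Lemma~\ref{lem:scalar-dominance}. I do not expect any real obstacle, but if one arose it would be in ensuring that the chosen coefficients $r_i$ produced by Lemma~\ref{lem:same_ring} lift cleanly through $\phi$ without needing surjectivity, and the pullback construction sidesteps this by never requiring any $S$-coefficient to be named outside the image of $\phi$.
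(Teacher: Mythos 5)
Your proposal is correct and follows essentially the same route as the paper's own proof: pull back the $S$-action along $\phi$ to make $H$ an $R$-module, invoke Lemma~\ref{lem:same_ring} via faithfulness of $G$ to get a scalar linear solution over $\Module{H}{R}$, replace each coefficient $C$ by $\phi(C)$ to obtain an identical-symbol solution over $\Module{H}{S}$, and upgrade scalar to fractional dominance with Lemma~\ref{lem:scalar-dominance}. No gaps.
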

\newcommand{\ProofOfLemModHomomorphism}{
\begin{proof}[Proof of Lemma~\ref{lem:ModHomomorphism}]
  Let $H$ be an $S$-module and
  define a mapping 
  $$\newaction: R \times H \to H$$ by
  $r \newaction h = \phi(r) \act h$,
  where $\act$ is the action of $\Module{H}{S}$.
  One can verify 
  that $H$ is an $R$-module under $\newaction$.
  Now, let $G$ be a faithful $R$-module,
  and suppose
  $\Network$ has a linear solution over $\Module{G}{R}$.
  By Lemma~\ref{lem:same_ring},
  $\Network$ is scalar linearly solvable over $\Module{H}{R}$,
  so every output $y' \in H$ in the solution over $\Module{H}{R}$ is of the form
  \begin{align}
    y' &= (C_1 \newaction x_1) \oplus \cdots \oplus (C_m \newaction x_m)
  \label{eq:100}
  \end{align}
  where $x_1,\dots,x_m \in H$ are the parent node's inputs 
  and $C_1,\dots,C_m \in R$ are constants.

  Form a linear code for $\Network$ over $\Module{H}{S}$ 
  by replacing each coefficient $C_i$ in \eqref{eq:100} by $\phi(C_i)$.
  Let $y \in H$ be the output in the code over $\Module{H}{S}$
  corresponding to $y'$ in the code over $\Module{H}{R}$.
  Then
  \begin{align*}
  y &= (\phi(C_1) \act x_1) \oplus \cdots \oplus (\phi(C_m) \act x_m)  \\
    &= (C_1 \newaction x_1) \oplus \cdots \oplus (C_m \newaction x_m) = y'.
  \end{align*}
  By induction, whenever an edge function in the solution over $\Module{H}{R}$ outputs the symbol $y'$,
  the corresponding edge function in the code over $\Module{H}{S}$ will output the same symbol $y'$.
  Likewise, whenever $x$ is an input to an edge function in the solution over $\Module{H}{R}$,
  the corresponding input of the corresponding edge function in the code over $\Module{H}{S}$ 
  will be the same symbol $x$.
  The same argument holds for the decoding functions in the code over $\Module{H}{S}$, so each
  receiver will correctly obtain its corresponding demands in the code over $\Module{H}{S}$.
  Hence, the code over $\Module{H}{S}$ is a linear solution for $\Network$.

  This implies that $\Module{H}{R}$ \scaDom{} $\Module{G}{R}$,
  which along with Lemma~\ref{lem:scalar-dominance},
  shows that $\Module{H}{R}$ \fracDom{} $\Module{G}{R}$.
\end{proof}}

By the fundamental theorem of finite Abelian groups,
every finite Abelian group
is isomorphic to a direct product of cyclic groups
whose sizes are prime powers (with component-wise addition)
\cite[p. 161]{Dummit-Algebra}.
As an example,
$\Z_{12} \cong \Z_4 \times \Z_3$.
The following lemma shows that if a finite Abelian group can be written
as a direct product of Abelian groups $G$ and $H$ whose sizes are relatively prime,
then whenever $G \times H$ is an $R$-module for some ring $R$,
the ring $R$ acts on $G \times H$ component-wise.
This implies that $G$ and $H$ are also $R$-modules.
Since fractional linear solutions over faithful $R$-modules
induce fractional linear solutions over every other $R$-module,
this is a useful tool for showing fractional dominance.

\begin{lemma}
  Let $G$ and $H$ be finite groups such that $|G|$ and $|H|$ are relatively prime,
  and let $G \times H$ be some $R$-module.
  Then $G$ and $H$ are also $R$-modules.
  \label{lem:submodules-1}
\end{lemma}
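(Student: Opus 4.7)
My plan is to identify $G$ with the subgroup $G\times\{0\}\subseteq G\times H$ (and similarly $H$ with $\{0\}\times H$) and then show that each of these subgroups is closed under the action of $R$. Once this closure is established, restricting the $R$-action to $G\times\{0\}$ turns $G$ into an $R$-module, and likewise for $H$.

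The central observation is that a coprimality argument on orders forces the action to preserve components. Concretely, let $r\in R$ and let $(g,0)\in G\times\{0\}$. By Lagrange's theorem, the order of $(g,0)$ in the Abelian group $G\times H$ divides $|G|$, so adding $(g,0)$ to itself $|G|$ times gives the identity. Using distributivity of the module action over the group operation (which holds by Definition~\ref{def:module}), I get
\[
|G|\cdot\bigl(r\act(g,0)\bigr)=r\act\bigl(|G|\cdot(g,0)\bigr)=r\act 0=0,
\]
where $|G|\cdot x$ denotes the $|G|$-fold sum of $x$ with itself. Writing $r\act(g,0)=(g',h')$, the element $h'\in H$ must then satisfy $|G|\cdot h'=0$; but the order of $h'$ also divides $|H|$, and since $\gcd(|G|,|H|)=1$ this forces $h'=0$. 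Hence $r\act(g,0)\in G\times\{0\}$. A symmetric argument gives $r\act(0,h)\in\{0\}\times H$.

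With closure in hand, I define the $R$-action on $G$ by transporting the action along the isomorphism $G\cong G\times\{0\}$, i.e.\ $r\act g$ is the first coordinate of $r\act(g,0)$. Every module axiom of Definition~\ref{def:module} then transfers directly from $G\times H$ to $G$, because the inclusion $G\times\{0\}\hookrightarrow G\times H$ is a group homomorphism that commutes with the just-established $R$-action. The same construction, applied to $\{0\}\times H$, makes $H$ an $R$-module.

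I do not anticipate a serious obstacle: the only delicate point is justifying that $|G|\cdot(r\act(g,0))=r\act(|G|\cdot(g,0))$, which is an immediate consequence of repeated use of the distributive law $r\act(x\oplus y)=(r\act x)\oplus(r\act y)$ and $r\act 0=0$. Everything else is bookkeeping.
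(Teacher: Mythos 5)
Your proposal is correct and takes essentially the same route as the paper's proof: both use Lagrange's theorem to get $|G|\cdot\bigl(r\act(g,0)\bigr)=0$ and then coprimality of $|G|$ and $|H|$ to conclude the $H$-component vanishes, after which the action restricts to each factor. The only cosmetic difference is that the paper invokes Cauchy's theorem where you argue directly that the order of $h'$ divides $\gcd(|G|,|H|)=1$.
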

\begin{proof}
  Let $g \in G$ and $r \in R$, and suppose $r \act (g,0) = (g_r,h_r) \in G \times H$.
  It follows from Lagrange's theorem of finite groups (e.g., \cite[p. 45]{Dummit-Algebra})
  that $|G| g = \underbrace{g \oplus \cdots \oplus g}_{|G| \text{ times}} = 0$,
  so
  \begin{align*}
    (0,0) &= r \act (0,0) 
       = r \act ( |G| \, g, 0)
       = |G| \, r \act (g, 0) = |G| \, (g_r,h_r) = (|G| g_r, |G| h_r) 
       = (0, |G| h_r).
  \end{align*}
  Since $|G|$ and $|H|$ are relatively prime, 
  it follows from Cauchy's theorem of finite groups (e.g., \cite[p. 93]{Dummit-Algebra})
  that $H$ contains no non-identity elements whose order divides $|G|$,
  so it must be the case that $h_r = 0$.
  Similarly,
  for each $h \in H$ and each $r \in R$,
  there exists $h_r \in H$ such that
  $r \act (0,h) = (0,h_r)$.
  This implies $R$ acts on $G \times H$ component-wise.
  In other words,
  if $r \act (g,h) = (g_r, h_r)$,
  then $r \act (g,0) = (g_r,0)$
  and $r \act (0,h) = (0,h_r)$.
  Thus the mapping $\newaction: R \times G \to G$
  given by $r \newaction g = g_r$ satisfies the properties of an action,
  so $G$ is an $R$-module with action $\newaction$.
  It can similarly be shown $H$ is an $R$-module.
\end{proof}

We comment that Lemma~\ref{lem:submodules-1}
does not extend to finite groups whose sizes are not relatively prime.
As an example,
the field $\GF{4}$ acts on its own additive group $(\GF{4},+)$
by multiplication in the field.
If the elements of $\GF{4}$ are represented as
$\{0,1,\alpha,\alpha+1\}$
where $\alpha^2 = \alpha + 1$,
then for all $(a_0 + \alpha a_1), (b_0 + \alpha b_1) \in \GF{4}$
$$(a_0 + \alpha a_1) \, (b_0 + \alpha b_1) 
  = a_0 b_0 + a_1 b_1 + \alpha (a_0 b_1 + a_1 b_0 + a_1 b_1).$$
The additive group of $\GF{4}$
is isomorphic to the set $\GF{2} \times \GF{2}$ with component-wise addition in $\GF{2}$,
so $\GF{4}$ acts on $\GF{2} \times \GF{2}$ by
$$(a_0 + \alpha \, a_1) \cdot (b_0,b_1)
  = (a_0 b_0 + a_1 b_1, \, a_0 b_1 + a_1 b_0 + a_1 b_1).$$
This action is not component-wise, since $(1 + \alpha) \act (1,0) = (1,1)$
and $\alpha \act (0,1) = (1,1)$.

If $\GF{4}$ acts on $\GF{2}$,
then the action must be such that
$1 \act a = a$ and $0 \act a = 0$ for all $a \in \GF{2}$
and $x \act 0 = 0$ for all $x \in \GF{4}$.
If $\alpha \act 1 = 1$, then 
$$0 = 1 + 1 = (\alpha \act 1) + (1 \act 1) = (\alpha + 1) \act 1
  = (\alpha^2) \act 1 = \alpha \act ( \alpha \act 1) = \alpha \act 1 = 1$$
  which is a contradiction.
If $\alpha \act 1 = 0$, then
$$1 = 0 + 1 = (\alpha \act 1) + (1 \act 1) = (\alpha + 1) \act 1
  = (\alpha^2) \act 1 = \alpha \act (\alpha \act 1) = \alpha \act 0 = 0$$
  which is a contradiction.
Thus $\GF{2}$ cannot be a $\GF{4}$-module,
but as shown above, $\GF{2} \times \GF{2}$ is a $\GF{4}$-module.

\newpage
\subsection{Matrix Rings over Fields} \label{ssec:matrix-rings}

If a ring $R$ has a proper two-sided ideal $I$,
then there is a surjective homomorphism from
$R$ to $R/I$.
It is known (e.g., \cite[p. 20]{McDonald-FiniteRings}) 
that every finite ring with no proper two-sided ideals
is isomorphic to some ring of matrices over a finite field.
In fact, every finite ring $R$
has a two-sided ideal $I$
such that $R/I$ is a matrix ring over a field.
This implies the following lemma,
which was more formally shown in \cite{CZ-NonCommutative}.

\begin{lemma}{\cite[\NonCommSimpleHomoLemmas]{CZ-NonCommutative}:}
  Let $R$ be a finite ring. 
  There exists a positive integer $t$,
  a finite field $\F$,
  and a surjective homomorphism from $R$ to $M_t(\F)$.
  \label{lem:simple}
\end{lemma}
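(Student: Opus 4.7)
The plan is to produce the homomorphism by quotienting $R$ by a maximal proper two-sided ideal and then invoking the classical structure theorem for finite simple rings.

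First, since $R$ is finite it has only finitely many two-sided ideals, so the collection of proper two-sided ideals of $R$ is a finite poset under inclusion and therefore contains at least one maximal element $I$. By the maximality of $I$, the quotient $R/I$ has no proper two-sided ideals other than $\{0\}$, i.e., $R/I$ is a (nonzero) finite \emph{simple} ring. The natural projection $\pi : R \to R/I$ is automatically a surjective ring homomorphism, so it only remains to identify $R/I$ with a matrix ring over a finite field.

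Next, I would invoke the structure theorem for finite simple rings, which is exactly the cited fact from \cite[p.~20]{McDonald-FiniteRings} (essentially Artin--Wedderburn combined with Wedderburn's little theorem): every finite ring with no proper two-sided ideals is isomorphic to $M_t(D)$ for some positive integer $t$ and some finite division ring $D$, and every finite division ring is a field. Applying this to $R/I$ yields an isomorphism $\psi : R/I \to M_t(\mathbb{F})$ for some positive integer $t$ and some finite field $\mathbb{F}$.

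Composing, the map $\psi \circ \pi : R \to M_t(\mathbb{F})$ is a surjective ring homomorphism, which is the required conclusion. The only subtlety to verify is that a maximal proper two-sided ideal actually exists; the finiteness of $R$ makes this immediate (alternatively one could quote Zorn's lemma, but that is unnecessary here). Everything else is either pure bookkeeping (the quotient of a ring by a two-sided ideal is a ring and the projection is a surjective ring homomorphism) or a direct citation, so the main conceptual step — and the place I would lean hardest on the literature — is the structural identification of finite simple rings with matrix rings over finite fields.
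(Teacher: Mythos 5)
Your proof is correct and follows essentially the same route the paper takes: the paper's discussion preceding the lemma likewise passes to a maximal two-sided ideal, identifies the simple quotient with a matrix ring over a finite field via the structure theorem cited from McDonald, and composes with the canonical projection (the formal details being deferred to the cited Lemmas II.1 and II.3 of the companion paper). No gaps to report.
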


Lemmas~\ref{lem:ModHomomorphism} and \ref{lem:simple}
together imply that fractional linear solutions over modules
induce fractional linear solutions over modules
in which the ring is a matrix ring over a field.
The following lemma proves a result on the cardinality 
of such modules.

\begin{lemma}
  Let $\F$ be a finite field and $t$ a positive integer.
  If $G$ is a finite non-zero $M_t(\F)$-module,
  then $|\F|^t$ divides $|G|$.
  \label{lem:smaller_module}
\end{lemma}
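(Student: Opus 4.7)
The plan is to exploit the matrix-unit idempotents in $M_t(\F)$ to decompose the abelian group $G$ into $t$ pieces of equal cardinality, each of which inherits the structure of an $\F$-vector space. For $i,j \in \{1,\dots,t\}$, let $e_{ij} \in M_t(\F)$ denote the matrix unit with a $1$ in position $(i,j)$ and $0$'s elsewhere. These satisfy $e_{ij} e_{kl} = \delta_{jk} e_{il}$ and $e_{11} + \cdots + e_{tt} = I$.

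First I would use the fact that $I$ acts as the identity on $G$ together with the orthogonality $e_{ii} e_{jj} = 0$ for $i \ne j$ to show that
\[
G \;=\; e_{11} G \,\oplus\, e_{22} G \,\oplus\, \cdots \,\oplus\, e_{tt} G
\]
as abelian groups, where each $e_{ii} G = \{e_{ii} \cdot g : g \in G\}$.

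Next I would show that the maps $g \mapsto e_{ji} \cdot g$ and $h \mapsto e_{ij} \cdot h$ define mutually inverse group isomorphisms between $e_{ii} G$ and $e_{jj} G$, using $e_{ij} e_{ji} = e_{ii}$ and $e_{ji} e_{ij} = e_{jj}$. Consequently $|e_{11} G| = |e_{22} G| = \cdots = |e_{tt} G|$, and so $|G| = |e_{11} G|^t$.

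Then I would observe that, since the scalar matrix $\alpha I$ commutes with $e_{11}$ for every $\alpha \in \F$, the subgroup $e_{11} G$ is closed under the action of $\{\alpha I : \alpha \in \F\}$, giving $e_{11} G$ the structure of an $\F$-vector space. Being finite, it must have cardinality $|\F|^d$ for some non-negative integer $d$. Finally, since $G \ne 0$ forces $e_{ii} G \ne 0$ for some $i$ and hence $e_{11} G \ne 0$ by the isomorphisms above, we get $d \ge 1$, so $|G| = |\F|^{td}$ and $|\F|^t$ divides $|G|$.

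There is no real obstacle here: the argument is a standard Peirce-decomposition computation. The only slightly delicate point is verifying that the $e_{ji}$-multiplication maps really land in the correct summand and are mutually inverse, but this follows immediately from the multiplication table of the matrix units together with the module axioms.
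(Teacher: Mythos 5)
Your argument is correct, but it takes a genuinely different route from the paper. The paper's proof is a two-line appeal to structure theory: since $G$ is finite and non-zero it contains a simple submodule, the cited classification (Lam, Theorem 3.3(2)) says $\F^t$ is the unique simple $M_t(\F)$-module, and Lagrange's theorem finishes. You instead carry out an explicit Peirce decomposition with the matrix units $e_{ij}$: the orthogonal idempotents $e_{11},\dots,e_{tt}$ summing to $I$ split $G$ into $t$ mutually isomorphic summands (via the $e_{ij}$, $e_{ji}$ maps), and restriction of scalars along $\alpha \mapsto \alpha I$ makes each summand a non-zero $\F$-vector space. Your computation is self-contained (no external classification theorem needed) and in fact yields the stronger conclusion that $|G| = |\F|^{td}$ is a power of $|\F|^t$, not merely divisible by it; the paper's approach is shorter but leans on the quoted result. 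All the delicate points you flag (directness of the sum, that $e_{ji}\cdot(e_{ii}G) \subseteq e_{jj}G$, mutual inverseness, and non-vanishing of $e_{11}G$) do check out from the matrix-unit relations and the module axioms, so there is no gap.
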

\begin{proof}
  Since $G$ is finite and non-zero,
  $G$ contains a submodule with no proper submodules
  (possibly $G$ itself).
  It is known (e.g., \cite[Theorem 3.3 (2), p. 31]{Lam-Noncommutative}) that 
  $\F^t$ is the only $M_t(\F)$-module with no proper submodules,
  so $\F^t$ is a submodule of $G$.
  Hence by Lagrange's theorem of finite groups,
  $|\F|^t$ divides $|G|$.
\end{proof}

Lemma~\ref{lem:min-module}
shows that every module
is \fracdominated{} by a module
whose group is the set of $t$ vectors over some field
and whose ring is the set of all $t \times t$ matrices over the field.
In network coding, arbitrarily large block sizes may be needed to achieve a solution with a particular rate.
Das and Rai \cite{Das-Fractional} showed that
for each $k,n \ge 1$ and each $t \ge 2$, 
there exists a network that has a $(tk,\dots,tk, tn)$ linear solution over any finite field,
yet the network has no $(sk,\dots,sk,sn)$ linear solution over any finite field when $s < t$.
It was also shown in \cite{CZ-NonCommutative}
that for each $t \ge 2$, 
there exist networks with scalar linear solutions over certain rings
but with no $s$-dimensional vector linear solutions over any field whenever $s < t$.
This suggests that the quantity $t$ in Lemma~\ref{lem:min-module}
may need to be arbitrarily large.

\begin{lemma}
  Let $\Module{G}{R}$ be a module.
  For each prime $p$ that divides $|G|$,
  there exists a finite field $\F$ of characteristic $p$ and a positive integer $t$
  such that 
  $\Module{\F^t}{M_t(\F)}$ \fracDom{} $\Module{G}{R}$.
  \label{lem:min-module}
\end{lemma}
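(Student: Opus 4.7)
The plan is to isolate the Sylow $p$-component of $G$ as a submodule, pass to a faithful action so that the acting ring has characteristic a power of $p$, and then invoke the homomorphism-to-matrix-ring machinery. By the fundamental theorem of finite abelian groups, write $G \cong G_p \oplus G'$ with $|G_p|$ a power of $p$ and $\gcd(|G_p|, |G'|) = 1$. Lemma~\ref{lem:faithful_module} provides a finite ring $S$ over which $G$ is faithful and $\Module{G}{S}$ \fracDom{} $\Module{G}{R}$. Because $|G_p|$ and $|G'|$ are coprime, Lemma~\ref{lem:submodules-1} endows $G_p$ with the structure of an $S$-module, and Lemma~\ref{lem:same_ring}, applied with the faithful $S$-module $G$ and the $S$-module $G_p$, gives $\Module{G_p}{S}$ \fracDom{} $\Module{G}{S}$.

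Apply Lemma~\ref{lem:faithful_module} a second time, now to $\Module{G_p}{S}$, to obtain a finite ring $S'$ over which $G_p$ is faithful and $\Module{G_p}{S'}$ \fracDom{} $\Module{G_p}{S}$. Since every $g \in G_p$ satisfies $|G_p| \cdot g = 0$, the element $|G_p| \cdot 1_{S'}$ annihilates all of $G_p$, and faithfulness forces $|G_p| \cdot 1_{S'} = 0$; hence $\Char{S'}$ divides $|G_p|$ and is a power of $p$. Next, Lemma~\ref{lem:simple} yields a surjective unital ring homomorphism $\phi : S' \to M_t(\F)$ for some finite field $\F$ and integer $t \ge 1$. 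Because $\phi$ is unital, $\Char{\F} = \Char{M_t(\F)}$ divides $\Char{S'}$, which is a power of $p$, so $\Char{\F} = p$.

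Finally, Lemma~\ref{lem:ModHomomorphism}, applied to $\phi$, the faithful $S'$-module $G_p$, and the $M_t(\F)$-module $\F^t$, yields $\Module{\F^t}{M_t(\F)}$ \fracDom{} $\Module{G_p}{S'}$. Chaining this with the three dominances established above gives $\Module{\F^t}{M_t(\F)}$ \fracDom{} $\Module{G_p}{S'}$ \fracDom{} $\Module{G_p}{S}$ \fracDom{} $\Module{G}{S}$ \fracDom{} $\Module{G}{R}$, which is the desired conclusion. The main obstacle is exactly the characteristic control: a direct application of Lemma~\ref{lem:simple} to $R$ (or even to $S$) would only ensure that $\Char{\F}$ divides $\Char{S}$, with no way to single out the given prime $p$. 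The Sylow decomposition and the two-step faithfulness reduction are precisely what forces the intermediate ring $S'$ to have characteristic a power of $p$, thereby pinning down $\Char{\F} = p$.
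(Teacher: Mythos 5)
Your proof is correct, and its skeleton is identical to the paper's: pass to a faithful ring $S$ via Lemma~\ref{lem:faithful_module}, split off the Sylow $p$-component using the fundamental theorem of finite abelian groups together with Lemmas~\ref{lem:submodules-1} and \ref{lem:same_ring}, make that component faithful over a second ring $S'$, and then push through Lemmas~\ref{lem:simple} and \ref{lem:ModHomomorphism} before chaining by transitivity. The one genuine divergence is how $\Char{\F}=p$ is established. The paper observes (via the proof of Lemma~\ref{lem:ModHomomorphism}) that the $p$-component $H$ is itself an $M_t(\F)$-module and invokes Lemma~\ref{lem:smaller_module} to get $\Div{|\F|^t}{|H|}$, forcing $\Char{\F}=p$ since $|H|$ is a power of $p$. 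You instead argue directly with characteristics: faithfulness of $G_p$ over $S'$ together with $|G_p|\cdot g=0$ forces $|G_p|\cdot 1_{S'}=0$, so $\Char{S'}$ is a power of $p$, and unitality of the surjection $S'\to M_t(\F)$ (which the paper's definition of ring homomorphism guarantees) then forces the prime $\Char{\F}=\Char{M_t(\F)}$ to divide that power of $p$. Your route is slightly more elementary --- it bypasses Lemma~\ref{lem:smaller_module} and the classification of simple $M_t(\F)$-modules entirely --- while the paper's route yields the stronger quantitative byproduct that $|\F|^t$ divides the order of the $p$-component. Both are sound; your closing remark correctly identifies why the two-step faithfulness reduction is needed to control the characteristic.
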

\begin{proof}
  By Lemma~\ref{lem:faithful_module}
  there exists a finite ring $S$ such that
  the faithful module $\Module{G}{S}$ \fracDom{} $\Module{G}{R}$.
  By the fundamental theorem of finite Abelian groups,
  the group $G$ is isomorphic to a direct product
  of Abelian groups whose sizes are prime powers,
  and since $\Div{p}{|G|}$,
  the size of at least one of these groups is a power of $p$.
  Let $H$ be the direct product of all such groups whose sizes are powers of $p$.
  Then there exists a finite group $G'$ such that
  $G \cong G' \times H$ and $|G'|$ and $|H|$ are relatively prime.
  Hence by Lemma~\ref{lem:submodules-1},
  $H$ is also an $S$-module,
  and since $G$ is a faithful $S$-module,
  by Lemma~\ref{lem:same_ring},
  the module $\Module{H}{S}$ \fracDom{} $\Module{G}{S}$.
  
  By Lemma~\ref{lem:faithful_module},
  there exists a finite ring $S'$ such that $H$ is a faithful $S'$-module
  and $\Module{H}{S'}$ \fracDom{} $\Module{H}{S}$.
  By Lemma~\ref{lem:simple},
  there exists a positive integer $t$,
  a finite field $\F$,
  and a surjective homomorphism from $S'$ to $M_t(\F)$.
  By Lemma~\ref{lem:ModHomomorphism},
  the module $\Module{H}{S'}$
  is \fracdominated{} by every $M_t(\F)$-module,
  and the ring $M_t(\F)$ acts on the of all $t$-vectors over $\F$
  by matrix-vector multiplication over $\F$,
  so $\Module{\F^t}{M_t(\F)}$ \fracDom{} $\Module{H}{S'}$.
  The proof of Lemma~\ref{lem:ModHomomorphism}
  also implies $H$ is an $M_t(\F)$-module,
  so Lemma~\ref{lem:smaller_module}
  implies $\Div{|\F|^t}{|H|}$.
  Since $|H|$ is a power of $p$,
  this implies $\F$ is a field of characteristic $p$.
  Finally, by the transitivity of \fractionaldominance,
  $\Module{\F^t}{M_t(\F)}$ \fracDom{} $\Module{G}{R}$.
\end{proof}

Lemma~\ref{lem:char_p} uses ideas similar to those in \cite[Proposition 1]{Sun-VL}
and \cite{Ebrahimi-Algorithms},
and we include a proof for completeness.
This lemma, along with Lemma~\ref{lem:vector_module},
implies that a fractional linear solution over any non-prime finite field
induces a fractional linear solution over the corresponding prime field
with the same rate vector.
A fractional linear solution over a field $\F$ is equivalent to 
a fractional linear solution over the faithful module
$\Module{\F}{\F}$,
since $\GF{\F}$ is commutative.

\begin{lemma}
  Let $q$ be a prime power and $t$ a positive integer.
  Then $\Module{\GF{q}^t}{M_t(\GF{q})}$ \fracDom{} $\Module{\GF{q^t}}{\GF{q^t}}$.
  \label{lem:char_p}
\end{lemma}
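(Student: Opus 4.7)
The plan is to invoke Lemma~\ref{lem:ModHomomorphism} with a suitable ring homomorphism from $\GF{q^t}$ into $M_t(\GF{q})$. Recall that Lemma~\ref{lem:ModHomomorphism} says: if $\phi : R \to S$ is a ring homomorphism, $G$ is a faithful $R$-module, and $H$ is an $S$-module, then $\Module{H}{S}$ \fracDom{} $\Module{G}{R}$. So I want to take $R = \GF{q^t}$, $S = M_t(\GF{q})$, $G = \GF{q^t}$ (with the canonical action), and $H = \GF{q}^t$ (with matrix--vector multiplication).

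First I would construct $\phi$ via the regular representation. The field $\GF{q^t}$ is a $t$-dimensional vector space over $\GF{q}$, so after fixing a $\GF{q}$-basis of $\GF{q^t}$, each element $a \in \GF{q^t}$ determines a $\GF{q}$-linear endomorphism $L_a : \GF{q^t} \to \GF{q^t}$ given by $L_a(x) = a x$, and we can take $\phi(a)$ to be the $t \times t$ matrix of $L_a$ in that basis. Then $\phi(a+b) = \phi(a)+\phi(b)$ and $\phi(ab) = \phi(a) \phi(b)$ follow from the fact that $L_{a+b} = L_a + L_b$ and $L_{ab} = L_a \circ L_b$, while $\phi(1) = I_t$ because $L_1$ is the identity map. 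Thus $\phi$ is a ring homomorphism.

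Next I would verify the hypotheses on the modules. The module $\Module{\GF{q^t}}{\GF{q^t}}$, in which a field acts on itself by multiplication, is faithful since $a \cdot 1 = a = 0$ forces $a = 0$. The module $\Module{\GF{q}^t}{M_t(\GF{q})}$ is an $M_t(\GF{q})$-module by matrix--vector multiplication. Applying Lemma~\ref{lem:ModHomomorphism} to $\phi$, the faithful $\GF{q^t}$-module $\GF{q^t}$, and the $M_t(\GF{q})$-module $\GF{q}^t$ yields $\Module{\GF{q}^t}{M_t(\GF{q})}$ \fracDom{} $\Module{\GF{q^t}}{\GF{q^t}}$, as required.

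There is no substantial obstacle here; the only subtle point is getting the direction of $\phi$ correct so that the hypotheses of Lemma~\ref{lem:ModHomomorphism} line up. In particular, we must map the \emph{smaller} field $\GF{q^t}$ into the matrix ring $M_t(\GF{q})$ (rather than trying to go the other way, which is impossible as $M_t(\GF{q})$ is not a field for $t \ge 2$), and this is exactly what the regular representation provides.
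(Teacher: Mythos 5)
Your proof is correct and follows essentially the same route as the paper: both embed $\GF{q^t}$ into $M_t(\GF{q})$ (the paper by citation, you by explicitly writing out the regular representation) and then apply Lemma~\ref{lem:ModHomomorphism} with the faithful module $\Module{\GF{q^t}}{\GF{q^t}}$ and the target module $\Module{\GF{q}^t}{M_t(\GF{q})}$. Your added detail on constructing $\phi$ and checking faithfulness is fine but not a different argument.
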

\begin{proof}
  It is known (e.g., see \cite[p. 531]{Dummit-Algebra})
  that every extension field $\GF{q^t}$
  is isomorphic to a set of $t \times t$ matrices over $\GF{q}$.
  This implies there exists an injective homomorphism from $\GF{q^t}$ to $M_t(\GF{q})$.
  By Lemma~\ref{lem:ModHomomorphism},
  any network with a fractional linear solution over $\Module{\GF{q^t}}{\GF{q^t}}$
  also has a fractional linear solution over any $M_t(\GF{q})$-module.
  In particular,
  $\Module{\GF{q}^t}{M_t(\GF{q})}$ \fracDom{} $\Module{\GF{q^t}}{\GF{q^t}}$.
\end{proof}

\clearpage

\section{Linear Rate Regions over Fields} \label{sec:fields}

\psfrag{e0}{$e_0$}
\psfrag{e1}{$e_1$}
\psfrag{en}{\!\!\footnotesize$e_{m+1}$}
\psfrag{ex}{$e$}
\psfrag{y0}{$x_0$}
\psfrag{y1}{$x_1$}
\psfrag{yn}{$x_{m+1}$}
\psfrag{u0}{\small$u_0$}
\psfrag{u1}{\small$u_1$}
\psfrag{un}{\!\!\footnotesize$u_{m+1}$}
\psfrag{ux}{\small$u$}
\psfrag{v0}{\small$v_0$}
\psfrag{v1}{\small$v_1$}
\psfrag{vn}{\!\!\footnotesize$v_{m+1}$}
\psfrag{vx}{\small$v$}
\psfrag{R0}{\small$R_0$}
\psfrag{R1}{\small$R_1$}
\psfrag{Rn}{\!\scriptsize$R_{m+1}$}
\psfrag{Rx}{\small\,\,$R$}
\psfrag{S0}{\small$S_0$}
\psfrag{S1}{\small$S_1$}
\psfrag{Sn}{\!\!\scriptsize$S_{m+1}$}

\begin{figure}[ht]
  \begin{center}
    \leavevmode
    \hbox{\epsfxsize=0.5\textwidth\epsffile{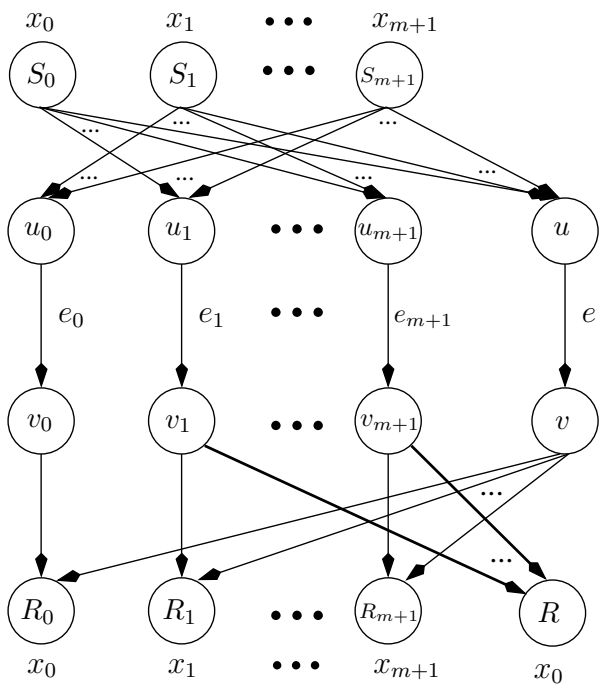}}
  \end{center}
  \caption{
    The \charNetwork{$m$} has source nodes $S_0,S_1,\dots,S_{m+1}$
    which generate message vectors $x_0,x_1,\dots,x_{m+1}$, respectively.
    Node $u$ has a single incoming edge
    from each source node,
    and the edge connecting nodes $u$ and $v$ carries the edge vector $e$.
    For each $i = 0,1,\dots,m+1$,
    node $u_i$ has a single incoming edge
    from each source node, except $S_i$.
    The edge connecting nodes $u_i$ and $v_i$ carries edge vector $e_i$.
    The receiver $R_i$ demands $x_i$ and has an incoming edge from node
    $v_i$ and an incoming edge from $v$.
    The receiver $R$ demands $x_0$
    and has an incoming edge from each of nodes $v_1,\dots,v_{m+1}$.
  }
\label{fig:p-network}
\end{figure}

We define, for each integer $m \ge 2$,  the \textit{\charNetwork{$m$}}  in Figure~\ref{fig:p-network}.
The \charNetwork{$m$} is denoted by $\Network_2(m,1)$ in \cite{CZ-NonLinear},
with a slight relabeling of sources,
and the \charNetwork{$m$} is known to be vector linearly solvable over a field
if and only if the characteristic of the field divides $m$.
When $m = 2$,
this network exhibits solvability properties similar to those of the Fano network \cite{DFZ-NonShannon}.

Let $R$ be a finite ring whose characteristic divides $m$.
Then $m = 0$ in $R$,
and the following scalar linear code:
\begin{align*}
  e &= \sum_{j=0}^{m+1} x_j 
  \; \; \text{ and } \; \; 
  e_i = \sum_{\substack{j=0 \\ j \ne i}}^{m+1} x_j
\end{align*}
over $R$ is a solution for the \charNetwork{$m$},
where $i = 0,1,\dots,m+1$, and the receivers linearly recover their demands as follows
\begin{align*}
  R_i : \; \ e - e_i &= x_i \\
  R : \; \ \ \sum_{i=1}^{m+1} e_i 
      & =  x_0 + m \sum_{i=0}^{m+1} x_i  \\
      &  = x_0 
        & & \Comment{$\Div{\Char{R}}{m}$}.
\end{align*}
This code relies on the fact $m = 0$ in $R$,
and it turns out the \charNetwork{$m$} has no scalar linear solutions
over any ring whose characteristic does not divide $m$
(see \cite[Lemma IV.6]{CZ-NonLinear}).

\begin{lemma}{\cite[Lemma IV.7]{CZ-NonLinear}:}
  For each $m \ge 2$ and each finite field $\F$,
  the linear capacity of the \charNetwork{$m$} is
  \begin{itemize}
    \item equal to $1$, whenever $\Div{\Char{\F}}{m}$, and
    \item upper bounded by $1-\frac{1}{2m+3}$,
    whenever $\NDiv{\Char{\F}}{m}$.
  \end{itemize}
  \label{lem:char-p-network}
\end{lemma}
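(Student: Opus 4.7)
The plan is to split the lemma into its two cases and tackle them separately.

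\emph{Case $\Char{\F} \mid m$.} The scalar linear code exhibited in the paragraph preceding the lemma already gives a rate-$1$ solution over $\F$, so the linear capacity is at least $1$. For the matching upper bound I would use a min-cut argument at the source-receiver pair $(S_0, R_0)$: the only two edges into $R_0$ come from $v_0$ and $v$, but $v_0$ is computed from $e_0$, which depends only on the sources other than $S_0$ (since $u_0$ has no incoming edge from $S_0$). Consequently every symbol reaching $R_0$ that carries any information about $x_0$ must traverse the single edge $(u,v)$, so in any $(k,\dots,k,n)$ fractional linear solution we must have $k \le n$, giving that the linear capacity is at most $1$.

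\emph{Case $\Char{\F} \nmid m$.} Here $m$ is invertible in $\F$, and the scalar code above fails because $\sum_{i=1}^{m+1} e_i$ evaluates to $x_0 + m\sum_j x_j \neq x_0$. I would fix an arbitrary $(k,\dots,k,n)$ linear solution over $\F$ and represent each intermediate edge vector as an $\F$-linear combination of the messages: $e = \sum_{j=0}^{m+1} A_j x_j$ and $e_i = \sum_{j \neq i} A_{i,j} x_j$, with coefficient matrices $A_j, A_{i,j} \in \F^{n \times k}$. The decoder at $R_i$ produces matrices $B_i, C_i \in \F^{k \times n}$ satisfying $B_i A_i = I_k$ and $B_i A_j + C_i A_{i,j} = 0$ for every $j \neq i$, while the decoder at $R$ produces matrices $D_1, \dots, D_{m+1} \in \F^{k \times n}$ satisfying $\sum_i D_i A_{i,0} = I_k$ and $\sum_{i \neq j} D_i A_{i,j} = 0$ for $j = 1,\dots,m+1$. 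Combining these identities and invoking the invertibility of $m$ in $\F$, I would derive a rank inequality on the joint column span of the $2m+3$ coefficient matrices $A_0, A_1, \dots, A_{m+1}, A_{1,0}, \dots, A_{m+1,0}$ that are actually constrained by the decoders, producing $(2m+3)\,k \le (2m+2)\,n$, or equivalently $k/n \le 1 - 1/(2m+3)$.

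The main obstacle is isolating the exact constant $1/(2m+3)$. In the scalar setting ($k = n = 1$) this is carried out in \cite[Lemma IV.6]{CZ-NonLinear}, where it is shown that the \charNetwork{$m$} has no scalar linear solution over $\F$ whenever $\Char{\F} \nmid m$. Lifting that argument to the fractional setting either requires working directly with $n \times k$ matrix coefficients, replacing each ``$A \neq 0$'' statement by a suitable rank bound, or alternatively passing via Lemma~\ref{lem:kn-to-scalar} to scalar linear solvability of the blown-up network $\knNetwork{k,\dots,k,n}$ and applying a generalized scalar argument there. In either case, the heart of the argument is the observation that the $2m+3$ decoder constraints cannot be simultaneously met at full rank unless $m = 0$ in $\F$, so the combined rank deficiency must total at least $n/(2m+3)$.
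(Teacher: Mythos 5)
This lemma is not proved in the paper at all: it is imported as \cite[Lemma IV.7]{CZ-NonLinear}, and the only supporting material here is the explicit scalar code for the case $\Div{\Char{\F}}{m}$, which you reproduce. Your first case is complete: that code gives rate $1$, and the cut at $R_0$ (every symbol carrying information about $x_0$ must traverse the single edge $(u,v)$ of capacity $n$, since $e_0$ is a function of $x_1,\dots,x_{m+1}$ only) gives $k\le n$.

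The second case has a genuine gap exactly where you flag it. Your decoder identities $B_iA_i=I_k$, $B_iA_j+C_iA_{i,j}=0$, $\sum_i D_iA_{i,0}=I_k$, $\sum_{i\ne j}D_iA_{i,j}=0$ are set up correctly, but the inequality $(2m+3)k\le(2m+2)n$ is asserted, never derived, and the stated mechanism is not checkable: the ``joint column span'' of your $2m+3$ matrices lives in $\F^n$ and has dimension at most $n$, so no inequality of that shape falls out of it directly; one has to specify which ranks are being summed over which edges and how the invertibility of $m$ forces the total deficiency, and that computation is the entire content of the upper bound. Neither fallback you offer closes it. Knowing that the \charNetwork{$m$} has no scalar (or vector) linear solution when $\NDiv{\Char{\F}}{m}$ only rules out the rate vector $(1,\dots,1)$ being achieved exactly; it does not preclude linear capacity equal to $1$ approached asymptotically, which is precisely why the paper treats asymptotic linear solvability as a separate, weaker notion. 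Likewise, Lemma~\ref{lem:kn-to-scalar} converts a $(tk,\dots,tk,tn)$ solution into a vector solution of $\knNetwork{k,\dots,k,n}$, but that is a \emph{different} network to which the cited scalar non-solvability result does not apply, so the ``generalized scalar argument'' you defer to there is again the missing proof. To complete the argument you must actually carry out the matrix computation that turns the inconsistency of these identities at $k=n$ into a quantitative rank deficit valid for every $k/n>1-\frac{1}{2m+3}$.
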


\subsection{Comparing Linear Rate Regions over Different Fields}

It follows from Lemma~\ref{lem:char-p-network}
that certain fields may yield strictly larger
linear capacities for some networks
than other fields.
In particular,
whenever the characteristics of two finite fields are different,
there exists some network whose linear capacities over the fields differ.

\begin{corollary}
  If $\F$ and $\K$ are finite fields with different characteristics,
  then there exist networks $\Network_1$ and $\Network_2$,
  such that 
  $\UniLinCap{\Network_1}{\F} > \UniLinCap{\Network_1}{\K}$
  and
  $\UniLinCap{\Network_2}{\K} > \UniLinCap{\Network_2}{\F}.$
 \label{cor:char-unequal}
\end{corollary}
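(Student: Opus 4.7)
The plan is to apply Lemma~\ref{lem:char-p-network} twice, instantiating it on the two Char-$m$ networks whose defining integers are the characteristics of the two given fields. Let $p = \Char{\F}$ and $q = \Char{\K}$, which by hypothesis are distinct primes, and in particular $p,q \ge 2$ so that the Char-$p$ and Char-$q$ networks are both well-defined by Figure~\ref{fig:p-network}.

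First I would take $\Network_1$ to be the Char-$p$ network. Since $p$ divides $p$, the first bullet of Lemma~\ref{lem:char-p-network} gives $\UniLinCap{\Network_1}{\F} = 1$. On the other hand, $\Char{\K} = q$ is a prime distinct from $p$, and because $p$ is itself prime its only positive divisors are $1$ and $p$, so $q \notdivides p$. The second bullet of Lemma~\ref{lem:char-p-network} then yields
\begin{equation*}
\UniLinCap{\Network_1}{\K} \;\le\; 1 - \frac{1}{2p+3} \;<\; 1 \;=\; \UniLinCap{\Network_1}{\F},
\end{equation*}
which is the first desired strict inequality.

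Next I would repeat the argument with the roles of $\F$ and $\K$ swapped, taking $\Network_2$ to be the Char-$q$ network. By the same reasoning, $\UniLinCap{\Network_2}{\K} = 1$ while $\UniLinCap{\Network_2}{\F} \le 1 - \frac{1}{2q+3} < 1$, giving $\UniLinCap{\Network_2}{\K} > \UniLinCap{\Network_2}{\F}$.

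There is essentially no main obstacle here: the heavy lifting has already been done in Lemma~\ref{lem:char-p-network}, and the corollary is a direct consequence once one observes the elementary fact that distinct primes do not divide one another. The only thing to watch is that the definition of the Char-$m$ network requires $m \ge 2$, which is automatic since the characteristics $p$ and $q$ are primes.
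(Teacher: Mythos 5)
Your proposal is correct and is essentially identical to the paper's own proof: both take $\Network_1$ and $\Network_2$ to be the \charNetwork{$p$} and \charNetwork{$q$} for $p = \Char{\F}$ and $q = \Char{\K}$, and apply the two bullets of Lemma~\ref{lem:char-p-network} using the fact that distinct primes do not divide one another. No further comment is needed.
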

\begin{proof}
  Suppose $\Char{\F} = p \ne q = \Char{\K}$
  and let $\Network_1$ and $\Network_2$ be the 
  \charNetwork{$p$} and the \charNetwork{$q$}, respectively.
  Then by Lemma~\ref{lem:char-p-network},
  $\UniLinCap{\Network_1}{\F} = 1$
  and 
  $\UniLinCap{\Network_1}{\K} \le 1 - \frac{1}{2p+3}$.
  Similarly,
  $\UniLinCap{\Network_2}{\K} = 1$
  and 
  $\UniLinCap{\Network_2}{\F} \le 1 - \frac{1}{2q+3}$.
\end{proof}

In \cite{DFZ-Regions},
it was shown that for any finite fields $\F$ and $\K$ of even and odd characteristic, respectively:
(i) the linear rate region of the non-Fano network over $\F$
is a proper subset of its linear rate region over $\K$,
and (ii) the linear rate region of the Fano network over $\K$
is a proper subset of its linear rate region over $\F$.
In these instances,
it is strictly ``better'' to use an even/odd characteristic field
instead of an odd/even characteristic field.
However,
the following theorem demonstrates that it may not always be the case
that one field is necessarily ``better'' than the other
for a particular network.
In particular,
for some networks,
some rate vectors may only be linearly achievable over certain fields
while other rate vectors may only be linearly achievable over other fields.

\begin{theorem}
For any two finite fields with different characteristics,
there exists a network whose linear rate regions over the fields
do not contain one another.
 \label{thm:char-not-contained}
\end{theorem}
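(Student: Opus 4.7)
The plan is to take $\Network$ to be the disjoint union of the two networks from Corollary~\ref{cor:char-unequal}. Writing $p=\Char{\F}$ and $q=\Char{\K}$ (so $p\ne q$), I let $\Network_p$ and $\Network_q$ denote the \charNetwork{$p$} and \charNetwork{$q$}, respectively, and form $\Network$ by placing a vertex- and edge-disjoint copy of each subnetwork side by side, with disjoint message sets. The resulting graph is still a finite directed acyclic multigraph with designated sources and receivers, so it is a network in the sense of Section~\ref{sec:model}.

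My first step will be to verify that the linear rate region of $\Network$ over any field $\mathbb{L}$ factors as the Cartesian product $\LinRegion{\Network_p}{\mathbb{L}}\times\LinRegion{\Network_q}{\mathbb{L}}$. Since the two subnetworks share no nodes or edges and their message sets are disjoint, each edge function and each decoder of $\Network$ depends only on messages and edges of a single subnetwork; consequently, any fractional linear code for $\Network$ of block size $n$ restricts to two codes of block size $n$, one for each subnetwork, and is a solution for $\Network$ if and only if both restrictions are solutions. Conversely, given individual linear solutions of block sizes $n_p$ and $n_q$ for $\Network_p$ and $\Network_q$, I will scale each to the common block size $n_p n_q$ (multiplying every $k_i$ by the appropriate factor) and juxtapose them to obtain a linear solution for $\Network$ at the combined rate vector.

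The rest of the argument then follows directly from Lemma~\ref{lem:char-p-network}. Consider the rate vector $\mathbf{r}_1$ that is $1$ on each of the $p+2$ messages of $\Network_p$ and $0$ on each of the $q+2$ messages of $\Network_q$. Over $\F$, the explicit scalar linear solution of the \charNetwork{$p$} displayed immediately before Lemma~\ref{lem:char-p-network}, combined with the trivial zero code on $\Network_q$, witnesses $\mathbf{r}_1\in\LinRegion{\Network}{\F}$. Over $\K$, the product decomposition would force $\Network_p$ to admit a vector linear solution of some dimension, yielding $\UniLinCap{\Network_p}{\K}\ge 1$, in contradiction to the upper bound $1-1/(2p+3)$ from Lemma~\ref{lem:char-p-network}. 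Thus $\mathbf{r}_1\in\LinRegion{\Network}{\F}\setminus\LinRegion{\Network}{\K}$, and the symmetric rate vector $\mathbf{r}_2$ obtained by swapping the roles of $p$ and $q$ will give $\mathbf{r}_2\in\LinRegion{\Network}{\K}\setminus\LinRegion{\Network}{\F}$, completing the proof.

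The main step requiring care will be the product decomposition of the linear rate region under disjoint union, since the paper's definition of a fractional linear code fixes a single block size $n$ across every edge of the network. I expect no genuine obstacle beyond the block-size bookkeeping above, because every edge and every decoder of $\Network$ lies in a unique subnetwork, so the two halves of the combined code are completely independent.
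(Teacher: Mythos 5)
Your proposal is correct and follows essentially the same route as the paper: the disjoint union of the \charNetwork{$p$} and the \charNetwork{$q$}, together with the two asymmetric rate vectors and the capacity bounds of Lemma~\ref{lem:char-p-network}. The only difference is that you spell out the product decomposition of the linear rate region under disjoint union (including the block-size alignment), which the paper leaves implicit.
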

\begin{proof}
  A \textit{disjoint union} of networks refers to a new network 
  whose nodes/edges/sources/receivers
  are the disjoint union of the nodes/edges/sources/receivers in the smaller networks.
  Let $\F$ and $\K$ be finite fields of characteristic $p$
  and $q$,
  for some distinct primes $p$ and $q$.
  Let $\Network$ be the disjoint union of
  the \charNetwork{$p$} and the \charNetwork{$q$}.
  Whenever node (respectively, edge and message) labels are repeated,
  add an arbitrary additional level of labeling each node (respectively, edge and message) to avoid repeated labels.
  Then, by Lemma~\ref{lem:char-p-network},
  the rate vector, in which the rates for 
  the \charNetwork{$p$}
  are all one
  and the rates for 
  the \charNetwork{$q$} are all zero,
  is linearly achievable over $\F$ but not over $\K$.
  Similarly,
  the rate vector in which the rates for 
  the \charNetwork{$q$}
  are all one
  and the rates for 
  the \charNetwork{$p$} are all zero
  is linearly achievable over $\K$ but not over $\F$.
  Thus the \linearrateregions{} of $\Network$ over $\F$ 
  and $\K$ do not contain one another.
\end{proof}

We can use a similar network construction
to show that
there is not necessarily a particular finite field
that can linearly achieve
all linearly achievable rate vectors.
In other words, 
there may not be a ``best'' field
for a particular network.
Let $p$ and $q$ be distinct primes,
and let $\Network$ be the disjoint union of
the \charNetwork{$p$} and the \charNetwork{$q$}.
Then, by a similar argument to the proof of Theorem~\ref{thm:char-not-contained},
there exists a rate vector that is only linearly achievable over fields of characteristic $p$,
and there exists another rate vector that is only linearly achievable over fields of characteristic $q$.
Thus there is no finite field which can linearly achieve both of these rate vectors.
A similar result can be obtained by taking the disjoint union of the Fano and non-Fano networks.

Theorem~\ref{thm:char-not-contained} demonstrates that for any two finite fields
of distinct characteristics,
there always exists some network whose \linearrateregions{} differ over the two fields.
In the following theorem, we show that the \linearrateregion{} of a network
over a field depends only on the characteristic of the field.
This contrasts with the scalar linear solvability of networks over fields,
since some networks can be scalar linearly solvable only over certain fields of a given characteristic.
\begin{theorem}
  Let $\F$ and $\K$ be finite fields.
  Then $\Char{\F} = \Char{\K}$
  if and only if for each network $\Network$, we have
  $\LinRegion{\Network}{\F} = \LinRegion{\Network}{\K}$.
  \label{thm:char-p}
\end{theorem}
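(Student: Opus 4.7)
The plan is to observe that one direction is essentially immediate and the other decomposes into two symmetric chains of implications through the prime subfield.

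For the forward direction, if $\Char{\F} \ne \Char{\K}$, then Theorem~\ref{thm:char-not-contained} already supplies a network whose linear rate regions over $\F$ and $\K$ do not even contain one another, so in particular they are unequal. This is a one-line appeal.

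For the converse direction, suppose $\Char{\F} = \Char{\K} = p$ and write $\F = \GF{p^a}$ and $\K = \GF{p^b}$. The plan is to prove that $\LinRegion{\Network}{\GF{p^a}} = \LinRegion{\Network}{\GF{p}}$ for every network $\Network$ and every $a \ge 1$; applying this equality once to $\F$ and once to $\K$ immediately yields $\LinRegion{\Network}{\F} = \LinRegion{\Network}{\K}$. So everything reduces to the two inclusions involving the prime subfield.

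For $\LinRegion{\Network}{\GF{p}} \subseteq \LinRegion{\Network}{\GF{p^a}}$, I would use the inclusion map $\GF{p} \hookrightarrow \GF{p^a}$, which is an injective ring homomorphism. Since $\GF{p}$ is a faithful $\GF{p}$-module, Lemma~\ref{lem:ModHomomorphism} applied with $R = \GF{p}$, $S = \GF{p^a}$, $G = \GF{p}$, and $H = \GF{p^a}$ gives that $\Module{\GF{p^a}}{\GF{p^a}}$ fractionally dominates $\Module{\GF{p}}{\GF{p}}$, so every rate vector linearly achievable over $\GF{p}$ is also linearly achievable over $\GF{p^a}$.

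For the reverse inclusion $\LinRegion{\Network}{\GF{p^a}} \subseteq \LinRegion{\Network}{\GF{p}}$, I would chain Lemma~\ref{lem:char_p} with Lemma~\ref{lem:vector_module}. Suppose $(k_1/n, \dots, k_m/n) \in \LinRegion{\Network}{\GF{p^a}}$, so $\Network$ has a $(k_1,\dots,k_m,n)$ linear solution over $\GF{p^a}$. By Lemma~\ref{lem:char_p} (with $q = p$, $t = a$), $\Module{\GF{p}^a}{M_a(\GF{p})}$ fractionally dominates $\Module{\GF{p^a}}{\GF{p^a}}$, so $\Network$ has a $(k_1,\dots,k_m,n)$ linear solution over $\Module{\GF{p}^a}{M_a(\GF{p})}$. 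Then Lemma~\ref{lem:vector_module} converts this to an $(ak_1, \dots, ak_m, an)$ linear solution over $\Module{\GF{p}}{\GF{p}}$, and the rate vector $(ak_1/(an),\dots,ak_m/(an)) = (k_1/n,\dots,k_m/n)$ is preserved, placing it in $\LinRegion{\Network}{\GF{p}}$.

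There is no real obstacle; all technical work is absorbed into the previously established lemmas. The only thing worth double-checking is that the rate vectors match under the reduction from $\Module{\GF{p}^a}{M_a(\GF{p})}$ to $\GF{p}$, which they do precisely because Lemma~\ref{lem:vector_module} inflates both the message dimensions and the edge dimension by the same factor $a$. The rest is bookkeeping.
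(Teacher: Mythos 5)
Your proposal is correct and follows essentially the same route as the paper: the reverse direction is delegated to Theorem~\ref{thm:char-not-contained}, and the forward direction factors through the prime field, using Lemma~\ref{lem:char_p} together with Lemma~\ref{lem:vector_module} to descend from $\GF{p^a}$ to $\GF{p}$ and Lemma~\ref{lem:ModHomomorphism} (via the subfield inclusion) to ascend back to any extension. The only cosmetic difference is that you state the intermediate equality $\LinRegion{\Network}{\GF{p^a}} = \LinRegion{\Network}{\GF{p}}$ explicitly and are slightly more careful in citing Lemma~\ref{lem:vector_module} for the rate-preserving conversion, which the paper leaves implicit in its chain of implications.
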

\begin{proof}
  Let $r$ and $s$ be positive integers, $p$ a prime,
  and $\Network$ a network with $m$ messages.
  Then $\GF{p}$ is a subfield $\GF{p^s}$,
  which implies the identity mapping is an injective homomorphism from $\GF{p}$
  to $\GF{p^s}$.
  So
  \begin{align*}
    &\text{$\Network$ has a $(k_1,\dots,k_m,n)$ linear solution over $\GF{p^r}$}
    \\
    &\implies \;
    \text{$\Network$ has an $(rk_1,\dots,rk_m,rn)$ linear solution over $\GF{p}$}
      & & \Comment{Lemma~\ref{lem:char_p}}
      \\
    &\implies \;
    \text{$\Network$ has an $(rk_1,\dots,rk_m,rn)$ linear solution over $\GF{p^s}$}
      & & \Comment{Lemma~\ref{lem:ModHomomorphism}}.
  \end{align*}
  Both a $(k_1,\dots,k_m,n)$ linear solution
  and a $(rk_1,\dots,rk_m,rn)$ linear solution have 
  the rate vector $(k_1/n,\dots,k_m/n)$.
  Hence any rate vector that is linearly attainable over $\GF{p^r}$
  is also linearly attainable over $\GF{p^s}$
  (with possibly larger vector sizes).
  Similarly,
  any rate vector that is linearly attainable over $\GF{p^s}$
  is also linearly attainable over $\GF{p^r}$
  (with possibly larger vector sizes).
  Hence if $\Char{\F} = \Char{\K}$,
  then the \linearrateregions{} of any network over $\F$ and $\K$ are equal.
  The reverse direction follows from Theorem~\ref{thm:char-not-contained}.
\end{proof}

Immediately following Definition~\ref{def:Dominance},
we showed that
for any finite rings $S$ and $R$,
$$\Module{S}{S\otimes S^{op}} \text{ \fracDom{} } \Module{R}{R\otimes R^{op}}
\; \Longrightarrow \; 
\LinRegion{\Network}{S} \supseteq \LinRegion{\Network}{R} \text{ for every network } \Network.$$
Theorem~\ref{thm:char-p} can be used to show
the converse is not necessarily true.
There are numerous examples in the literature
(e.g., see \cite[\CommLemmaFieldNoDominance]{CZ-Commutative}, \cite{Sun-BaseField}, \cite{Sun-FieldSize})
of networks that are scalar linearly solvable over $\GF{p^r}$
but not over $\GF{p^s}$,
for some prime $p$ and some distinct positive integers $r$ and $s$.
In such cases, 
$\GF{p^s}$ does not \fracDom{} $\GF{p^r}$;
however, by Theorem~\ref{thm:char-p},
any network's \linearrateregion{} over either field is the same,
since both fields have characteristic $p$.

\begin{corollary}
  Let $\F$ and $\K$ be finite fields.
  Then $\Char{\F} = \Char{\K}$
  if and only if for each network $\Network$, we have
  $\UniLinCap{\Network}{\F} = \UniLinCap{\Network}{\K}$.
  \label{cor:char-p}
\end{corollary}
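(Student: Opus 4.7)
The plan is to deduce this corollary almost immediately from two results already established in this section: Theorem~\ref{thm:char-p} (equality of linear rate regions characterizes matching characteristic) and Corollary~\ref{cor:char-unequal} (different characteristics force some network to separate the linear capacities). Both directions then reduce to routine unpacking of the definition
$$\UniLinCap{\Network}{R} = \sup\{r \in \Q \,:\, (r,\ldots,r) \text{ is linearly achievable for } \Network \text{ over } R\},$$
which is determined entirely by the intersection of $\LinRegion{\Network}{R}$ with the diagonal $\{(r,\ldots,r) : r \in \Q\}$.

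For the forward direction, I would assume $\Char{\F} = \Char{\K}$ and invoke Theorem~\ref{thm:char-p} to conclude $\LinRegion{\Network}{\F} = \LinRegion{\Network}{\K}$ for every network $\Network$. Since the diagonal rate vectors $(r,\ldots,r)$ that lie in $\LinRegion{\Network}{\F}$ are then exactly those that lie in $\LinRegion{\Network}{\K}$, the defining suprema coincide, giving $\UniLinCap{\Network}{\F} = \UniLinCap{\Network}{\K}$.

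For the reverse direction, I would argue the contrapositive: if $\Char{\F} \ne \Char{\K}$, then Corollary~\ref{cor:char-unequal} already exhibits a network $\Network_1$ (namely the Char-$p$ network, where $p = \Char{\F}$) with $\UniLinCap{\Network_1}{\F} = 1$ but $\UniLinCap{\Network_1}{\K} \le 1 - 1/(2p+3) < 1$. In particular, there is a network on which the linear capacities over $\F$ and $\K$ differ, contradicting the assumed equality.

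There is no real obstacle here; the work was done in Theorem~\ref{thm:char-p} and Corollary~\ref{cor:char-unequal}. The only thing to be careful about is the direction of the logical reductions: the forward direction uses Theorem~\ref{thm:char-p} directly, while the reverse direction uses Corollary~\ref{cor:char-unequal} (equivalently, one could cite Theorem~\ref{thm:char-not-contained} and note that the separating rate vectors in its proof can be scaled to the diagonal, but citing Corollary~\ref{cor:char-unequal} is cleaner since it already delivers uniform-rate separation).
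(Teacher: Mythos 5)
Your proposal is correct and matches the paper's argument exactly: the paper also derives this corollary as an immediate consequence of Theorem~\ref{thm:char-p} (forward direction, via equality of rate regions restricted to the diagonal) and Corollary~\ref{cor:char-unequal} (reverse direction, via the contrapositive). Your additional unpacking of the supremum over diagonal rate vectors is a correct elaboration of what the paper leaves implicit.
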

\begin{proof}
  This corollary is an immediate consequence of Theorem~\ref{thm:char-p}
  and Corollary~\ref{cor:char-unequal}.
\end{proof}

\newpage
\section{Linear Rate Regions over Rings} \label{sec:rings}

The following theorem demonstrates that if a network has a fractional linear solution over some module
and if $p$ is a prime that divides the alphabet size (i.e., the size of the group),
then the network must also have a fractional linear solution over every field of characteristic $p$
with the same rate vector and possibly larger vector sizes.

\begin{theorem}
  Let $\Module{G}{R}$ be a module
  and let $\F$ be a finite field whose characteristic divides $|G|$.
  For each network $\Network$ and each $k_1,\dots,k_m \ge 0 $ and $n \ge 1$
  such that $\Network$ has a $(k_1,\dots,k_m,n)$ linear solution over $\Module{G}{R}$,
  there exists a positive integer $t$
  such that $\Network$ has a $(tk_1,\dots,tk_m,tn)$ linear solution over $\F$.
  \label{thm:higher-dim-char-p}
\end{theorem}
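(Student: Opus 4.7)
The strategy is to chain together the tools developed in Section~\ref{sec:fractional}. I would use Lemma~\ref{lem:min-module} to replace the arbitrary module $\Module{G}{R}$ by one of the form $\Module{{\F'}^{t_1}}{M_{t_1}(\F')}$ where $\F'$ is a finite field of the same characteristic $p=\Char{\F}$; translate this matrix-module solution into a vector solution over the field $\F'$ via Lemma~\ref{lem:vector_module}; descend from $\F'$ to the prime subfield $\GF{p}$ using Lemma~\ref{lem:char_p}; and finally push back up to $\F$ via the canonical inclusion $\GF{p}\hookrightarrow \F$ together with Lemma~\ref{lem:ModHomomorphism}. Because each step multiplies every block size by the same integer factor, the end result has the required form $(tk_1,\dots,tk_m,tn)$.

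\textbf{Execution.} Let $p=\Char{\F}$, which by hypothesis divides $|G|$, and suppose $\Network$ has a $(k_1,\dots,k_m,n)$ linear solution over $\Module{G}{R}$. By Lemma~\ref{lem:min-module}, there exists a finite field $\F'$ with $\Char{\F'}=p$ and a positive integer $t_1$ such that $\Module{{\F'}^{t_1}}{M_{t_1}(\F')}$ \fracDom{} $\Module{G}{R}$; hence $\Network$ has a $(k_1,\dots,k_m,n)$ linear solution over $\Module{{\F'}^{t_1}}{M_{t_1}(\F')}$. Applying Lemma~\ref{lem:vector_module} to the module $\Module{\F'}{\F'}$ with dimension $t_1$ converts this into a $(t_1k_1,\dots,t_1k_m,t_1n)$ linear solution over $\F'$. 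Write $|\F'|=p^a$. Lemma~\ref{lem:char_p} (with $q=p$ and extension degree $a$) gives $\Module{\GF{p}^a}{M_a(\GF{p})}$ \fracDom{} $\Module{\GF{p^a}}{\GF{p^a}}=\Module{\F'}{\F'}$, yielding a $(t_1k_1,\dots,t_1k_m,t_1n)$ linear solution over $\Module{\GF{p}^a}{M_a(\GF{p})}$. A second application of Lemma~\ref{lem:vector_module}, this time with $\Module{\GF{p}}{\GF{p}}$ and dimension $a$, produces an $(at_1k_1,\dots,at_1k_m,at_1n)$ linear solution over $\GF{p}$. Finally, the inclusion $\GF{p}\hookrightarrow \F$ is an injective ring homomorphism, so Lemma~\ref{lem:ModHomomorphism} promotes this to an $(at_1k_1,\dots,at_1k_m,at_1n)$ linear solution over $\F$. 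Setting $t=at_1$ completes the argument.

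\textbf{Main obstacle.} There is no substantial technical hurdle — the statement is essentially a bookkeeping exercise composing the dominance and dimension-change results already proved. The only mild subtlety is tracking block sizes through the two vector-module conversions and the descent to the prime subfield; one must verify that the cumulative inflation preserves the uniform scaling $(k_1,\dots,k_m,n)\mapsto(tk_1,\dots,tk_m,tn)$ rather than producing independently scaled coordinates. This is automatic because each stage of the chain multiplies \emph{every} block size by the same integer factor ($t_1$ in the first vector-module step and $a$ in the descent $\F'\to\GF{p}$), and the final homomorphism step leaves block sizes unchanged.
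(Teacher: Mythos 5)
Your proof is correct and follows essentially the same route as the paper: Lemma~\ref{lem:min-module} to reach a matrix module over a field of characteristic $p$, Lemma~\ref{lem:vector_module} to convert to a fractional solution over that field, and then a characteristic-preserving change of field. The only difference is that the paper invokes Theorem~\ref{thm:char-p} as a black box for the last step, whereas you inline its proof (Lemma~\ref{lem:char_p}, a second application of Lemma~\ref{lem:vector_module}, and Lemma~\ref{lem:ModHomomorphism}), which has the minor advantage of tracking the uniform block-size factor $t=at_1$ explicitly.
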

\begin{proof}
  Let $p = \Char{\F}$.
  By Lemma~\ref{lem:min-module},
  there exists a finite field $\K$ of characteristic $p$ 
  and a positive integer $s$ such that
  $\Module{\K^s}{M_s(\K)}$ \fracDom{} $\Module{G}{R}$.
  Lemma~\ref{lem:vector_module}
  implies a network $\Network$
  with a $(k_1,\dots,k_m,n)$ linear solution over $\Module{\K^s}{M_s(\K)}$
  must also have
  an $(sk_1,\dots,sk_m,sn)$ linear solution over $\K$.
  Since $\F$ and $\K$ both have characteristic $p$,
  and since the rate vector $(k_1/n,\dots,k_m/n)$
  is linearly achievable for $\Network$ over $\K$,
  by Theorem~\ref{thm:char-p},
  the rate vector $(k_1/n,\dots,k_m/n)$ is also linearly achievable for $\Network$ over $\F$.
  Hence there exists a positive integer $t$
  such that $\Network$ has a 
  $(tk_1,\dots,tk_m,tn)$ linear solution over $\F$.
\end{proof}

We now prove one of our main results regarding \linearrateregions{} over rings.

\begin{theorem}
  If $R$ is a finite ring and $\F$ is a finite field whose characteristic divides $|R|$,
  then the \linearrateregion{} of any network over $R$
  is contained in the network's \linearrateregion{} over $\F$.
  \label{thm:ring-lin-cap}
\end{theorem}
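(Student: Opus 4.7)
The plan is to deduce Theorem~\ref{thm:ring-lin-cap} as an essentially immediate corollary of Theorem~\ref{thm:higher-dim-char-p}. The key observation is that a two-sided fractional linear solution over the ring $R$ is, by definition, a fractional linear solution over the module $\Module{R}{R \otimes R^{op}}$, whose underlying Abelian group is $(R,+)$. Thus the cardinality of the group in this module is exactly $|R|$, and the hypothesis $\Char{\F} \mid |R|$ in the theorem matches precisely the hypothesis of Theorem~\ref{thm:higher-dim-char-p}.

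First I would take an arbitrary rate vector $\mathbf{r} \in \LinRegion{\Network}{R}$. Unwinding the definition, $\mathbf{r} = (k_1/n,\dots,k_m/n)$ for some integers $k_1,\dots,k_m \ge 0$ and $n \ge 1$ such that $\Network$ has a $(k_1,\dots,k_m,n)$ two-sided linear solution over $R$, equivalently a $(k_1,\dots,k_m,n)$ linear solution over $\Module{R}{R \otimes R^{op}}$. Next I would invoke Theorem~\ref{thm:higher-dim-char-p} with $G = (R,+)$ (so $|G|=|R|$) and the field $\F$: since $\Char{\F}$ divides $|R|=|G|$, there exists a positive integer $t$ such that $\Network$ has a $(tk_1,\dots,tk_m,tn)$ linear solution over $\F$.

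Finally, I would observe that the rate vector of this new solution is $(tk_1/(tn),\dots,tk_m/(tn)) = (k_1/n,\dots,k_m/n) = \mathbf{r}$, so $\mathbf{r} \in \LinRegion{\Network}{\F}$. Since $\mathbf{r}$ was arbitrary, this yields $\LinRegion{\Network}{R} \subseteq \LinRegion{\Network}{\F}$, completing the proof.

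There is no real obstacle to overcome here: all of the substantive work has already been done in building up Lemmas~\ref{lem:kn-to-scalar}--\ref{lem:char_p} and in Theorems~\ref{thm:char-p} and~\ref{thm:higher-dim-char-p}. The only thing worth highlighting explicitly is the passage from ``two-sided linear over $R$'' to ``linear over the module $\Module{R}{R \otimes R^{op}}$,'' which was established in Section~\ref{ssec:ring-linearity}, together with the trivial but essential observation that the block sizes $(tk_1,\dots,tk_m,tn)$ produced by Theorem~\ref{thm:higher-dim-char-p} scale uniformly and therefore preserve the rate vector.
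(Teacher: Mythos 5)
Your proposal is correct and follows exactly the paper's own route: identify a two-sided fractional linear solution over $R$ with a fractional linear solution over the module $\Module{R}{R\otimes R^{op}}$ (whose group has cardinality $|R|$), apply Theorem~\ref{thm:higher-dim-char-p}, and note that uniform scaling of the block sizes preserves the rate vector. No gaps; nothing further needed.
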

\begin{proof}
  Let $R$ be a finite ring, 
  let $\Network$ be a network,
  and
  let $\F$ finite field whose characteristic divides $|R|$.
  A fractional two-sided linear solution over $R$
  is a fractional linear solution over the module
  $\Module{R}{R \otimes R^{op}}$,
  so by
  Theorem~\ref{thm:higher-dim-char-p},
  whenever $\Network$ has a fractional linear solution over $R$
  with a given rate vector,
  $\Network$ also has a fractional linear solution over $\F$
  with the same rate vector and possibly larger vector sizes.
  Hence, 
\begin{align*}
 & \{\mathbf{r} \in \Q^m \, : \, \mathbf{r} \text{ is linearly achievable for $\Network$ over } R \} \\
 &\ \ \  \subseteq
   \{\mathbf{r} \in \Q^m \, : \, \mathbf{r} \text{ is linearly achievable for $\Network$ over } \F \}. 
\end{align*}
\end{proof}

\begin{corollary}
  If $R$ is a finite ring and $\F$ is a finite field whose characteristic divides $|R|$,
  then the linear capacity of any network over $R$ is less than or equal to
  its linear capacity over $\F$.
  \label{cor:ring-lin-capacity}
\end{corollary}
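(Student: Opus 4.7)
The plan is to derive this as a near-immediate consequence of Theorem~\ref{thm:ring-lin-cap}. The capacities $\UniLinCap{\Network}{R}$ and $\UniLinCap{\Network}{\F}$ are both defined as suprema of sets of rationals, so the goal reduces to showing that the set defining the former is contained in the set defining the latter; then the monotonicity of suprema gives the inequality.

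Concretely, I would proceed as follows. Let $\Network$ be a network with $m$ messages, and fix any $r \in \Q$ such that the uniform rate vector $(r,\dots,r) \in \Q^m$ is linearly achievable for $\Network$ over $R$. By the definition of the \linearrateregion{}, this means $(r,\dots,r) \in \LinRegion{\Network}{R}$. Since $\Char{\F}$ divides $|R|$, Theorem~\ref{thm:ring-lin-cap} gives the containment
\begin{equation*}
  \LinRegion{\Network}{R} \subseteq \LinRegion{\Network}{\F},
\end{equation*}
and hence $(r,\dots,r) \in \LinRegion{\Network}{\F}$, i.e., the uniform rate $r$ is linearly achievable for $\Network$ over $\F$ as well. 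Thus
\begin{equation*}
  \{r \in \Q : (r,\dots,r) \text{ linearly achievable for } \Network \text{ over } R\}
  \subseteq
  \{r \in \Q : (r,\dots,r) \text{ linearly achievable for } \Network \text{ over } \F\}.
\end{equation*}
Taking the supremum of both sides yields $\UniLinCap{\Network}{R} \le \UniLinCap{\Network}{\F}$, as required.

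There is essentially no obstacle here: all of the work is already inside Theorem~\ref{thm:ring-lin-cap}, whose proof invokes Theorem~\ref{thm:higher-dim-char-p} to translate fractional two-sided linear solutions over $R$ into fractional linear solutions over $\F$ with the same rate vector (possibly after rescaling the block sizes by a common factor $t$, which preserves the rate vector and hence the uniform rate). The only subtlety worth mentioning explicitly in the write-up is that uniform achievability is preserved under this rescaling, because multiplying all $k_i$ and $n$ by the same integer $t$ leaves each ratio $k_i/n$ unchanged; this is what guarantees that the containment of rate regions restricts cleanly to the diagonal used in the capacity definition.
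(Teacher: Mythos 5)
Your proof is correct and matches the paper's intent exactly: the paper states this corollary without proof as an immediate consequence of Theorem~\ref{thm:ring-lin-cap}, and your argument (restrict the rate-region containment to the diagonal and take suprema) is precisely the implicit reasoning. No issues.
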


In some cases,
the containment in Theorem~\ref{thm:ring-lin-cap}
(and the inequality in Corollary~\ref{cor:ring-lin-capacity}) 
is strict for some networks,
while in other cases,
there may be equality for all networks.
As an example,
by taking $\F = \GF{2}$ and $R = \Z_6$ in Theorem~\ref{thm:ring-lin-cap},
any network's \linearrateregion{} over $\GF{2}$
contains its \linearrateregion{} over $\Z_6$.
However, the linear capacity of the \charNetwork{$2$} 
is $1$ over the field $\GF{2}$
and is upper bounded by $6/7$ over the field $\GF{3}$
(see Lemma~\ref{lem:char-p-network}).
Since $3 = \Char{\GF{3}}$,
which divides $6 = |\Z_6|$,
by Corollary~\ref{cor:ring-lin-capacity},
the \charNetwork{$2$}'s linear capacity over $\Z_6$
is upper bounded by $6/7$.
This demonstrates that
the \linearrateregions{} of $R$ and $\F$
are not necessarily equal for all networks.

As another example,
by taking $\F = \GF{4}$ and $R = \Z_2[X]/\langle X^2 \rangle$ in Theorem~\ref{thm:ring-lin-cap},
any network's \linearrateregion{} over $\GF{4}$
contains its \linearrateregion{} over $\Z_2[X]/\langle X^2 \rangle$.
The field $\GF{2}$ is isomorphic to a subring of $\Z_2[X]/\langle X^2 \rangle$
(namely $\Z_2$),
so there is an injective homomorphism from $\GF{2}$ to $\Z_2[X]/\langle X^2 \rangle$,
which by Lemma~\ref{lem:ModHomomorphism},
implies any network's \linearrateregion{} over $\Z_2[X]/\langle X^2 \rangle$
contains its \linearrateregion{} over $\GF{2}$.
However, by Theorem~\ref{thm:char-p},
any network's \linearrateregions{} over $\GF{4}$ and $\GF{2}$ must be equal.
Thus the \linearrateregions{} of $\GF{4}$ and $\Z_2[X]/\langle X^2 \rangle$
are equal for all networks.
Precisely characterizing for which rings and fields
the \linearrateregions{} are equal for all networks
remains an open problem.

\subsection{Comparing Linear Capacities over Different Rings}

Determining the exact linear capacity and the \linearrateregion{} of the \charNetwork{$m$}
over each finite ring (or even each finite field) is also presently an open problem.
Another related open question is
for which finite rings $R$ and $S$ does there exist a network $\Network$
such that
$\UniLinCap{\Network}{R} > \UniLinCap{\Network}{S}$.
We have answered this second question in some select special cases:
\begin{itemize}
  \item In Theorem~\ref{thm:char-p}, 
  we showed that when $R$ and $S$ are finite fields,
  such a network exists if and only if the characteristics of $R$ and $S$ differ.

  \item In Theorem~\ref{thm:ring-lin-cap}, 
  we showed that when $S$ is a field whose characteristic divides $|R|$,
  no such network exists.
  This includes the special case where $|S|=|R|$.
\end{itemize}

\begin{corollary}
  Let $R$ and $S$ be finite rings.
  If some prime factor of $|S|$ is not a factor of $|R|$,
  then there exists a network $\Network$
  such that $\UniLinCap{\Network}{R} > \UniLinCap{\Network}{S}$.
  \label{cor:ring-capacity-r-s}
\end{corollary}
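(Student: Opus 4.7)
The plan is to take $\Network$ to be the $\charNetwork{m}$ with $m := \Char{R}$. The essential arithmetic point is that $\Char{R}$ equals the additive order of $1_R$ in $(R,+)$, so Lagrange's theorem gives $\Char{R} \mid |R|$; combined with the hypothesis $p \nmid |R|$, this forces $p \nmid m$. Since the alphabet has size at least $2$, the ring $R$ is non-trivial and $m \ge 2$, so the $\charNetwork{m}$ is well-defined.

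For the lower bound on $\UniLinCap{\Network}{R}$, I would invoke the explicit scalar linear code $e = \sum_j x_j$, $e_i = \sum_{j \ne i} x_j$ displayed just before Lemma~\ref{lem:char-p-network}, which is a solution over any ring whose characteristic divides $m$. Since $\Char{R} = m$, this code works over $R$ (and is trivially two-sided linear, with all right coefficients equal to $1$), so the uniform rate vector $(1,\ldots,1)$ is linearly achievable and $\UniLinCap{\Network}{R} \ge 1$. For the matching upper bound on $\UniLinCap{\Network}{S}$, I would apply Corollary~\ref{cor:ring-lin-capacity} with the field $\GF{p}$: since $p \mid |S|$, it gives $\UniLinCap{\Network}{S} \le \UniLinCap{\Network}{\GF{p}}$, and because $p \nmid m$, Lemma~\ref{lem:char-p-network} then bounds $\UniLinCap{\Network}{\GF{p}} \le 1 - \frac{1}{2m+3} < 1$. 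Chaining these,
\[
\UniLinCap{\Network}{R} \;\ge\; 1 \;>\; 1 - \frac{1}{2m+3} \;\ge\; \UniLinCap{\Network}{S}.
\]

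There is no deep obstacle here once the choice $m = \Char{R}$ is made: all of the content is front-loaded into Lemma~\ref{lem:char-p-network} and Corollary~\ref{cor:ring-lin-capacity}. The only real decision is to pick $m$ so that simultaneously $\Char{R} \mid m$ (to make the Char-$m$ scalar solution available over $R$) and $p \nmid m$ (to enable the $\GF{p}$ upper bound for $S$), and $m = \Char{R}$ is the most natural such choice, with $p \nmid m$ falling out automatically from $\Char{R} \mid |R|$. The naive alternative of taking $\Network = \charNetwork{p}$ would go in the wrong direction, since the characteristic-$p$ structure needed to make the Char-$p$ network easy is present in $S$ but absent from $R$.
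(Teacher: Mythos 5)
Your proof is correct and follows essentially the same route as the paper: the paper takes $\Network$ to be the \charNetwork{$|R|$} and chains Corollary~\ref{cor:ring-lin-capacity} (via $\GF{p}$) with Lemma~\ref{lem:char-p-network} in exactly the way you do, using $\Char{R} \mid |R|$ for the lower bound and $p \nmid |R|$ for the upper bound. Your choice $m = \Char{R}$ instead of $m = |R|$ is an immaterial variation (it just requires the extra observation $p \nmid \Char{R}$, which you correctly derive from $\Char{R} \mid |R|$).
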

\begin{proof}
  Let $p$ divide $|S|$ but not $|R|$,
  and let $\Network$ denote the \charNetwork{$|R|$}.
  Then,
  \begin{align*}
  \UniLinCap{\Network}{S} 
    &\le \UniLinCap{\Network}{\GF{p}}  
      & \Comment{Theorem~\ref{thm:ring-lin-cap}}
      \\
    &\le 1-\frac{1}{2|R|+3}        
      & \Comment{$\NDiv{p}{|R|}$ and Lemma~\ref{lem:char-p-network}}
      \\
    &< 1 
    \\
    &\le \UniLinCap{\Network}{R} 
      & \Comment{$\Div{\Char{R}}{|R|}$}
  \end{align*}
  where the last inequality uses the fact that 
  $\Network$ must be scalar linearly solvable over $R$,
  since the characteristic of $R$ divides the size of $R$.
\end{proof}

Corollary~\ref{cor:ring-capacity-r-s} 
implies that if the sizes of two rings do not share the same set of prime factors,
then at least one of the rings induces a higher linear capacity than the other on some network.
As an example,
the \charNetwork{$6$} has a strictly larger linear capacity over the ring $\Z_6$
than over the field $\GF{25}$ of larger size.

Corollary~\ref{cor:ring-capacity-r-s}, in particular, implies that
for \textit{every} finite field and \textit{every} ring, whose sizes are relatively prime,
there is \textit{some} network 
for which the linear capacity of the network over the ring is strictly larger than the linear capacity  over the field.
In contrast, Theorem~\ref{thm:ring-lin-cap} shows that
for \textit{every} ring and \textit{every} network, 
there is \textit{some} field 
for which the linear capacity of the network over the ring is less than or equal to the linear capacity over the field.
These facts are succinctly summarized in the following theorem.

\begin{theorem}
  Let $\F$ be a finite field and $R$ be a finite ring.
  Then $|\F|$ and $|R|$ are relatively prime
  if and only if 
  there exists a network $\Network$ such that
  $\UniLinCap{\Network}{R} > \UniLinCap{\Network}{\F}$.
  \label{thm:ring-cap-iff}
\end{theorem}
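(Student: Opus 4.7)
The plan is to exploit the fact that a finite field has prime-power cardinality and then assemble Corollaries~\ref{cor:ring-lin-capacity} and~\ref{cor:ring-capacity-r-s} to cover the two directions of the biconditional. Let $p = \Char{\F}$, so that $|\F| = p^k$ for some positive integer $k$, and note that $p$ is the only prime dividing $|\F|$. Consequently, the condition ``$|\F|$ and $|R|$ are relatively prime'' is equivalent to ``$p \nmid |R|$,'' which is in turn the negation of ``$\Char{\F}$ divides $|R|$.'' This observation is the bridge that allows the hypotheses of the two previous corollaries to be read off from the relative primality condition.

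For the reverse direction, I would argue the contrapositive. Suppose $|\F|$ and $|R|$ are not relatively prime. Since $p$ is the only prime factor of $|\F|$, it must be that $p \mid |R|$, i.e., $\Char{\F}$ divides $|R|$. Corollary~\ref{cor:ring-lin-capacity} then immediately gives $\UniLinCap{\Network}{R} \le \UniLinCap{\Network}{\F}$ for every network $\Network$, so no network witnessing strict inequality exists.

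For the forward direction, assume $|\F|$ and $|R|$ are relatively prime. Then $p$ is a prime factor of $|\F|$ that is not a factor of $|R|$, so the hypothesis of Corollary~\ref{cor:ring-capacity-r-s} is satisfied with $S = \F$ (the roles of the two rings in that corollary are exactly what is needed). The corollary directly produces a network $\Network$ with $\UniLinCap{\Network}{R} > \UniLinCap{\Network}{\F}$, and in fact one may take $\Network$ to be the \charNetwork{$|R|$} so as to make the witness explicit.

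The ``hard part'' here is really just the conceptual packaging: both nontrivial inequalities have already been established in the preceding corollaries, which themselves rest on the \charNetwork{$m$} machinery from Lemma~\ref{lem:char-p-network} together with the module-theoretic machinery of Section~\ref{sec:fractional}. The only new ingredient is the elementary observation that a finite field's cardinality is a prime power, which reduces the gcd condition on $(|\F|,|R|)$ to a divisibility condition on $(\Char{\F},|R|)$ and thereby lines up the two corollaries as the two halves of a single biconditional.
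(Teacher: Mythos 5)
Your proposal is correct and follows essentially the same route as the paper: the authors likewise reduce the gcd condition to $\NDiv{\Char{\F}}{|R|}$, obtain the forward direction from Corollary~\ref{cor:ring-capacity-r-s} (with the \charNetwork{$|R|$} as the witness), and observe that the converse is a restatement of Corollary~\ref{cor:ring-lin-capacity}.
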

\begin{proof}
  Let $p = \Char{\F}$.
  Then
  $|\F|$ and $|R|$ are relatively prime
  if and only if $\NDiv{p}{|R|}$.
  
  If $\NDiv{p}{|R|}$,
  then by Corollary~\ref{cor:ring-capacity-r-s},
  there exists a network $\Network$ such that
  $\UniLinCap{\Network}{R} > \UniLinCap{\Network}{\F}$.
  The converse is a restatement of Corollary~\ref{cor:ring-lin-capacity}.
\end{proof}

\subsection{Asymptotic Solvability}

We say that a network $\Network$ 
is \textit{asymptotically solvable over $\A$}
if for all $\epsilon \in (0,1)$,
the rate vector $$(1-\epsilon,\dots,1-\epsilon)$$
is contained in the network's \rateregion.
In other words, a uniform rate arbitrarily close to, or above, $1$ is attainable.
A network which is asymptotically solvable
but is not solvable was demonstrated in \cite{DFZ-Unachievability},
and non-linearly solvable networks were demonstrated
in \cite{CZ-NonLinear} and \cite{DFZ-Insufficiency}
that are not asymptotically linearly solvable over any finite field.
The following corollary demonstrates that such networks
are additionally not asymptotically linearly solvable over any module (or ring).

\begin{corollary}
  If a network is asymptotically linearly solvable over some module or ring,
  then it must be asymptotically linearly solvable over some finite field.
  \label{cor:module-asymptotic}
\end{corollary}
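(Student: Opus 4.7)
The plan is to reduce the module (or ring) case directly to a single finite field via the chain of fractional dominance results built up in the paper, while tracking that the achievable rate vector is preserved throughout.

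First, I would unify the ring case with the module case. A two-sided linear solution over a ring $R$ is exactly a linear solution over the module $\Module{R}{R\otimes R^{op}}$, so it suffices to assume $\Network$ is asymptotically linearly solvable over some module $\Module{G}{R}$. Fix any prime $p$ dividing $|G|$ (which exists since $|G|\ge 2$). Apply Lemma~\ref{lem:min-module} once, for this module and this prime, to obtain a finite field $\F$ of characteristic $p$ and a positive integer $t$ such that $\Module{\F^t}{M_t(\F)}$ \fracDom{} $\Module{G}{R}$. Crucially, $\F$ and $t$ depend only on the module, not on $\epsilon$.

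Next, I would run the solution-transfer for an arbitrary $\epsilon\in(0,1)$. By hypothesis, the rate vector $(1-\epsilon,\dots,1-\epsilon)$ is linearly achievable over $\Module{G}{R}$, so $\Network$ admits a $(k_1,\dots,k_m,n)$ linear solution over $\Module{G}{R}$ with $k_i/n = 1-\epsilon$ for all $i$. Fractional dominance yields a $(k_1,\dots,k_m,n)$ linear solution over $\Module{\F^t}{M_t(\F)}$ with the same block sizes, hence the same rate vector. Applying Lemma~\ref{lem:vector_module} converts this into a $(tk_1,\dots,tk_m,tn)$ linear solution over $\Module{\F}{\F}$, i.e., a fractional linear solution over the field $\F$, and the rate vector is again $(1-\epsilon,\dots,1-\epsilon)$ because we scaled numerator and denominator by the common factor $t$.

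Since $\epsilon$ was arbitrary, every rate vector $(1-\epsilon,\dots,1-\epsilon)$ with $\epsilon\in(0,1)$ is linearly achievable over the single field $\F$, so $\Network$ is asymptotically linearly solvable over $\F$. The only subtlety is ensuring that the field $\F$ does not drift with $\epsilon$; this is handled cleanly because Lemma~\ref{lem:min-module} produces $\F$ from the module alone, independently of any particular code. There is no serious obstacle — the work was done in Sections~\ref{ssec:dominance} and \ref{ssec:matrix-rings}, and this corollary is essentially an asymptotic repackaging of Theorem~\ref{thm:higher-dim-char-p}.
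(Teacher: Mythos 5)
Your proof is correct and follows essentially the same route as the paper: the paper's proof is a direct citation of Theorem~\ref{thm:higher-dim-char-p} (noting that the field there is fixed by the module in advance, so it cannot drift with $\epsilon$), whereas you inline that theorem's proof via Lemma~\ref{lem:min-module} and Lemma~\ref{lem:vector_module}. The only cosmetic difference is that by keeping the specific field produced by Lemma~\ref{lem:min-module} you avoid the appeal to Theorem~\ref{thm:char-p} that Theorem~\ref{thm:higher-dim-char-p} uses to pass to an arbitrary field of the same characteristic.
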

\begin{proof}
  Suppose a network $\Network$
  is asymptotically linearly solvable over some module $\Module{G}{R}$.
  By Theorem~\ref{thm:higher-dim-char-p},
  there exists a finite field $\F$ such that
  any rate vector that is linearly achievable over $\Module{G}{R}$
  must also be linearly achievable over $\F$.
  Hence $\Network$ is also asymptotically linearly solvable over $\F$.
  This also implies any network that is asymptotically linearly solvable over some ring
  must also be asymptotically linearly solvable over some field,
  since a fractional linear code over a ring is a special case of a fractional linear code over a module.
\end{proof}

\clearpage
\section{Concluding Remarks}
  \label{sec:conclusion}

Linear network codes over finite rings (and modules)
constitutes a much broader class of codes
than linear network codes over finite fields.
Linear codes over rings have many of the attractive properties
of linear codes over fields, including implementation complexity 
and possibly mathematical tractability.
We have demonstrated, however, that with respect to linear capacity and linear rate regions,
this broader class of codes
does \textit{not} offer an improvement over linear codes over fields.
This particularly contrasts with the network solvability problem
where we demonstrated certain cases where a ring alphabet
can offer scalar linear solutions when a field alphabet cannot.

\appendix

\section{Proofs of Lemmas in Section~\ref{sec:fractional}}

The proofs in this appendix
are results from \cite{CZ-NonCommutative}
that we include for completeness.

\subsection{Proof of Lemma~\ref{lem:same_ring} \cite[Lemma I.3]{CZ-NonCommutative}}
\LemSameRingProof

\subsection{Proof of Lemma~\ref{lem:faithful_module} \cite[Lemma II.6]{CZ-NonCommutative}}
\lemFaithfulModuleProof

\subsection{Proof of Lemma~\ref{lem:ModHomomorphism} \cite[Lemma I.6]{CZ-NonCommutative}}
\ProofOfLemModHomomorphism

\section*{Acknowledgment}

The authors wish to thank the anonymous reviewers
and the associate editor P. Sadeghi for some helpful suggestions.


\end{document}